\newtheorem{definition}{Definition}[]
\newtheorem{proposition}{Proposition}[]
\newtheorem{theorem}{Theorem}[]
\newtheorem{lemma}[]{Lemma}
\DeclareMathOperator*{\argmax}{arg\,max}
\DeclareMathOperator*{\argmin}{arg\,min}
\DeclareMathOperator{\E}{\mathbb{E}}
\def\n{{\mathbf n}}
\def\y{{\mathbf y}}
\def\C{{\mathbf C}}
\def\P{{\mathbf W}}
\def\I{{\mathbf I}}
\def\r{{\mathbf r}}
\def\x{{\mathbf x}}
\def\A{{\mathbf A}}
\def\Q{{\mathbf Q}}
\def\U{{\mathbf U}}
\def\V{{\mathbf V}}
\def\R{{\mathbb{R}}}
\def\I{{\mathbf I}}
\def\a{{\mathbf a}}
\def\u{{\mathbf u}}
\def\Sb{{\mathbf \Sigma}}
\def\S{{\cal S}}
\def\O{{\cal O}}
\def\T{{\cal T}}
\def\K{{\cal K}}
\def\X{{\cal X}}
\newcommand{\ts}{\textsuperscript}
\newcommand{\Var}{\mathrm{Var}}
\let\oldref\ref
\renewcommand{\ref}[1]{(\oldref{#1})}
\newcommand{\RNum}[1]{\uppercase\expandafter{\romannumeral #1\relax}}
\newcommand{\ra}[1]{\renewcommand{\arraystretch}{#1}}
\renewcommand{\fnum@figure}{Fig.~\thefigure}
\title{\textbf{Towards Accelerated Greedy Sampling and Reconstruction of Bandlimited Graph Signals}}
\date{}
\author{Abolfazl~Hashemi$^\dagger$, Rasoul~Shafipour$^\ddagger$, Haris~Vikalo, and Gonzalo~Mateos\thanks{Work in this paper was supported in part by the NSF award CCF-1750428 and ECCS-1809327. Abolfazl Hashemi and Haris Vikalo are with the Department of Electrical and Computer Engineering, University of Texas at Austin, Austin, TX 78712, USA. Rasoul Shafipour is with Microsoft, Redmond, WA. Gonzalo Mateos is with the Department of Electrical and Computer Engineering, University of Rochester, Rochester, NY 14627, USA. Part of the results in this paper were presented at the Fourty-Third IEEE International Conference on Acoustics, Speech, and Signal
Processing, Calgary, Canada, April 2018 \cite{hashemi2018sampling} and the Sixth IEEE Global Conference on Signal and Information Processing, Anaheim, California, USA, November 2018 \cite{hashemi2018sampling_global}. $^\dagger$work done while with the Department of Electrical and Computer Engineering, University of Texas at Austin. $^\ddagger$work done while with the Department of Electrical and Computer Engineering, University of Rochester.}}
\begin{document}
\maketitle
\begin{abstract}
\noindent{
{\color{black}
We study the problem of sampling and reconstructing spectrally sparse graph signals where the objective is to select a subset of nodes of prespecified cardinality that ensures interpolation of the original signal with the lowest possible reconstruction error. This task is of critical importance in Graph signal processing (GSP) and while existing methods generally provide satisfactory performance, they typically entail a prohibitive computational cost when it comes to the study of large-scale problems. Thus, there is a need for accelerated and efficient methods tailored for high-dimensional and large-scale sampling and reconstruction tasks. To this end, we first consider a non-Bayesian scenario and propose an efficient iterative node sampling procedure that in the noiseless case enables exact recovery of the original signal from the set of selected nodes. In the case of noisy measurements, a bound on the reconstruction error of the proposed algorithm is established. Then, we consider the Bayesian scenario where we formulate the sampling task as the problem of maximizing a monotone weak submodular function, and propose a randomized-greedy algorithm to find a sub-optimal subset of informative nodes. We derive worst-case performance guarantees on the mean-square error achieved by the randomized-greedy algorithm for general non-stationary graph signals.
}
}
\\\\\noindent
\textbf{Keywords:} graph signal processing, sampling, reconstruction, weak submodularity, iterative algorithms
\end{abstract}

%
\section{Introduction}\label{sec:intro}
Network data that are naturally supported on vertices of a graph are becoming increasingly ubiquitous, 
with examples ranging from the measurements of neural activities in different regions of the brain 
\cite{huang2016graph} to vehicle trajectories over road networks \cite{deri_nyc_taxi}. Predicated on
the assumption that the properties of a network process relate to the underlying graph, the goal of 
graph signal processing (GSP) is to broaden the scope of traditional signal processing tasks and 
develop algorithms that fruitfully exploit this relational structure \cite{shuman2013,sandryhaila2013}.

Consider a network represented by a graph $\mathcal{G}$ consisting of a node set $\mathcal{N}$ of 
cardinality $N$ and a weighted adjacency matrix $\mathbf{A} \in \mathbb{R}^{N \times N}$ whose
$(i,j)$ entry, $\A_{ij}$, denotes weight of the edge connecting node $i$ to node $j$. A \textit{graph 
signal} $\mathbf{x} \in \mathbb{R}^{N}$ is a vertex-valued network process that can be represented 
by a vector of size $N$ supported on $\mathcal{N}$, where its $i\ts{th}$ component denotes the 
signal value at node $i$.

A cornerstone problem in GSP that has drawn considerable attention in recent years pertains to 
sampling and reconstruction of graph signals 
\cite{shomorony2014sampling,tsitsvero2016signals,anis2016efficient,chen2015discrete,
chepuri2016subsampling,marques2016sampling,gama2016rethinking,jayawant2018distance,chamon2017greedy}. The 
task of selecting a subset of nodes whose signals enable reconstruction of the information in the 
entire graph with minimal loss is known to be NP-hard. Conditions for exact reconstruction of 
graph signals from noiseless samples were put forth 
in~\cite{shomorony2014sampling,tsitsvero2016signals,anis2016efficient,chen2015discrete}. 
Existing approaches for sampling and reconstruction of graph signals can be categorized in two 
main groups -- selection sampling \cite{chen2015discrete} and aggregation sampling \cite{marques2016sampling}. The focus of the current paper is on the former.
\subsection{Related work}
Sampling of noise-corrupted signals using randomized schemes including uniform and leverage 
score sampling is studied in \cite{chen2016signal}; there, optimal sampling distributions and
performance bounds are derived. Building on the ideas of variable density sampling from 
compressed sensing, \cite{puy2016random} derives random sampling schemes and proves that 
$\mathcal{O}(k \log k)$ samples are sufficient to recover all $k$-spectrally sparse signals with high 
probability. Moreover, \cite{puy2016random} provides a fast technique for accurate estimation of 
the optimal sampling distribution. Recent work \cite{tremblay2017graph} relies on loop-erased 
random walks on graphs to speed up sampling of bandlimited signals. In 
\cite{chepuri2016subsampling,chamon2017greedy}, reconstruction of graph signals and their 
power spectrum density was studied and schemes based on the greedy sensor selection algorithm 
\cite{shamaiah2010greedy,shamaiah2012greedy} were developed. However, the performance 
guarantees in \cite{chen2016signal,chamon2017greedy} are restricted to the case of stationary 
graph signals, i.e., the covariance matrix in the nodal or spectral domains is required to have a
certain structure (e.g., diagonal; see also 
\cite{marques2016stationaryTSP16,perraudinstationary2016,girault_stationarity}). 

An influential work \cite{pesenson2008sampling} presents a method that enables recovery of some 
bandlimited functions on a simple undirected unweighted graph using signal values observed on the 
so-called uniqueness sets of vertices; see also \cite{narang2013signal} and \cite{anis2014towards}. 
An iterative local set-based algorithm that relies on graph partitioning to improve convergence rate 
of bandlimited graph signals reconstruction is proposed in \cite{wang2015local}.

The sampling approach in \cite{marques2016sampling} relies on collecting observations at a single 
node instead of a subset of nodes via successive applications of the so-called graph shift operator 
and aggregating the results. Specifically, shifted versions of the signal are sampled at a 
single node which, under certain conditions, enables recovery of the signal at all nodes. While the 
aggregation sampling in \cite{marques2016sampling} reduces to the classical sampling of time signals, 
the required inspection of the invertibility of the submatrix of eigenvectors is computationally 
expensive. Moreover, the recovery of graph signals from their partial samples collected via the 
aggregation scheme requires the first $k$ components (signal bandwidth) to be distinct, which may 
not be the case in certain applications. 
Table \ref{hopkins} summarizes properties of a few

A main challenge in sampling and reconstruction of spectrally sparse graph signals is the problem of 
identifying their support  \cite{marques2016sampling,di2016adaptive,romero2017kernel,narang2013signal,anis2016efficient}.
In \cite{anis2014towards,anis2016efficient}, support identification of smooth graph signals is studied. 
However, the techniques in \cite{narang2013signal,anis2016efficient} rely solely on a user-defined 
sampling strategy and the graph Laplacian, and disregard the availability of observations of the graph 
signal. A similar scheme is developed in \cite{marques2016sampling} for aggregation sampling where 
under established assumptions on the topology of a graph, conditions for the exact support identification 
from noiseless measurements are established. In particular, the aggregation sampling method of 
\cite{marques2016sampling} requires twice as many samples as the bandwidth of the graph signal 
(i.e., $k$) to guarantee perfect recovery in the noiseless setting. An alternating minimization approach 
that jointly recovers unknown support of the signal and designs a sampling strategy in an iterative 
fashion is proposed in \cite{di2016adaptive}. However, convergence of the alternating scheme in 
\cite{di2016adaptive} is not guaranteed and the conditions for exact support identification are 
unknown \cite{di2016adaptive}.
\subsection{Contribution}
{\color{black}
Although tremendous efforts have been made to address fundamental theoretical and algorithmic questions in sampling and reconstruction of bandlimited graph signals, the high computational costs of existing methods that deliver competitive reconstruction performance typically render their applicability challenging, especially in applications dealing with large-scale and high-dimensional graphs. Therefore, developing scalable, efficient, and accelerated sampling and reconstruction algorithms with provable performance is highly desired.}

In this paper, we consider the task of sampling and reconstruction of spectrally sparse graph signals in 
various settings.
We first study the non-Bayesian scenario where no prior information about signal covariance 
is available. Based on ideas from compressed sensing, we develop a novel and efficient iterative 
sampling approach that exploits the low-cost selection criterion of the orthogonal matching pursuit 
algorithm \cite{pati1993orthogonal} to recursively select a subset of nodes of the graph. We 
theoretically demonstrate that in the noiseless case the original $k$-spectrally sparse signal can be 
recovered exactly from the set of selected nodes with cardinality $k$. In the case of $\ell_2$-norm 
bounded noise, we establish a bound on the worst-case reconstruction error of the proposed 
algorithm that turns out to be proportional to the bound on the $\ell_2$-norm of the noise term. 
The proposed scheme requires only that the graph adjacency matrix is normal, a typical assumption in 
prior works on the sampling of graph signals. Therefore, the proposed iterative algorithm guarantees 
recovery for a wide class of graph structures.

Next, we study a Bayesian scenario where the graph signal is a non-stationary network process 
with a known non-diagonal covariance matrix.  Following 
\cite{chamon2017greedy,shamaiah2010greedy,shamaiah2012greedy}, we formulate the sampling 
task as the problem of maximizing a monotone weak submodular function that is directly related to 
the mean square error (MSE) of the linear estimator of the original graph signal. To find a sub-optimal 
solution to this combinatorial optimization problem, we propose a randomized-greedy algorithm that 
is significantly faster than the greedy sampling method in 
\cite{chamon2017greedy,shamaiah2010greedy,shamaiah2012greedy}. We theoretically analyze 
performance of the proposed randomized-greedy algorithm and demonstrate that the resulting MSE 
is a constant factor away from the MSE of the optimal sampling set. Unlike the prior work in 
\cite{chamon2017greedy}, our results do not require stationarity of the graph signal. Furthermore, in 
contrast to the existing theoretical works, we do not restrict our study to the case of additive white noise. 
Instead, we assume that the noise coefficients are independent and allow the power of noise to vary 
across individual nodes of the graph. 

Simulation studies on both synthetic and real world graphs verify our theoretical findings 
and illustrate that the proposed sampling framework compares favorably to competing alternatives 
in terms of both accuracy and runtime.

Preliminary results of this work is published in \cite{hashemi2018sampling,hashemi2018sampling_global}. In addition to providing the details of proofs which were missing from \cite{hashemi2018sampling,hashemi2018sampling_global} and discussing the computational complexity of the proposed algorithms, for the Bayesian setting, we extend the scope of our study to provide {\it high probability} error bounds for the achievable mean-square error performance of the proposed randomized greedy sampling schemes. Finally, in our extensive experimental study, we discuss two new applications of graph sampling, namely, localization of UAVs under power constraints and semi-supervised face clustering via subspace learning. We further demonstrate the efficacy of our randomized greedy algorithm on a large-scale preferential attachment graph with 10,000 nodes.
\subsection{Organization}
The rest of the paper is organized as follows. Section \ref{sec:pre} reviews the relevant background 
and concepts. In Section \ref{sec:bl}, we formally state the sampling problem and develop the proposed 
iterative selection sampling method. In Section \ref{sec:alg}, we study the Bayesian setting, introduce 
the randomized-greedy algorithm for the sampling task and theoretically analyze its performance. 
Section \ref{sec:sim} presents simulation results while the concluding remarks are stated in Section 
\ref{sec:concl}. 

\begin{table*}[t]\centering
\footnotesize
	\caption{Properties of sampling schemes for spectrally sparse signals in scenarios where the basis matrix $\U$ is known.}
	\ra{1}
	\begin{tabular*}{1\linewidth}{@{}ccccc@{}}\toprule
		Assumption &\phantom{}&Optimality criteria&\phantom{}&Algorithms\\ \midrule
		noise-free samples, non-Bayesian&&full rank $\U_\S$ &&\begin{tabular}{@{}c@{}}Gaussian elimination, greedy\cite{shomorony2014sampling},\\random\cite{puy2016random,chen2016signal} \end{tabular}  \\\midrule	
		noisy samples, non-Bayesian&& $\min \mathrm{Tr}(\E[(\x-\hat{\x})((\x-\hat{\x})^\top])$&&\begin{tabular}{@{}c@{}}Gaussian elimination, greedy\cite{chen2015discrete},\\random\cite{puy2016random,chen2016signal} \end{tabular} \\\midrule	
		noisy samples, Bayesian&& $\min \mathrm{Tr}(\E[(\x-\hat{\x})((\x-\hat{\x})^\top])$&& greedy\cite{chamon2017greedy}, convex optimization\cite{joshi2009sensor}\\
		\bottomrule
	\end{tabular*} 
	\label{hopkins}
\end{table*}

\section{Preliminaries}\label{sec:pre}
In this section, we overview notation, concepts, and definitions that are used in the development of the proposed algorithmic and theoretical frameworks.
\subsection{Notations}
Bold capital letters denote matrices while 
bold lowercase letters represent vectors. Sets are denoted by calligraphic letters and $|\S|$ denotes the 
cardinality of set $\S$. $\A_{ij}$ denotes the $(i,j)$ entry of $\A$, $\a_j$ 
($\mathbf{a}^{j}$) is the $j\ts{th}$ row (column) of $\A$, $\A_{\S,r}$ ($\A_{\S,c}$) is the submatrix of $\A$ 
that contains rows (columns) indexed by the set $\S$, and $\lambda_{max}(\A)$ and $\lambda_{min}(\A)$ 
are the largest and smallest eigenvalues of $\A$, respectively. $\mathbf{P}_\S^\bot=\I_n-\A_{\S,r}^\top 
(\A_{\S,r}^\top)^\dagger$ is the projection operator onto the orthogonal complement of the subspace spanned 
by the rows of $\A_{\S,r}$, where $\A^\dagger=\left(\A^{\top}\A\right)^{-1}\A^{\top}$ denotes the 
Moore-Penrose pseudo-inverse of $\A$ and $\I_n \in \R^{n\times n}$ is the identity matrix. Finally, 
$\text{supp}(\x)$ returns the support of $\x$ and $[n] := \{1,2,\dots,n\}$.
\subsection{Spectrally sparse graph signals}
Let $\x$ be a graph signal which is $k$-spectrally sparse in a given basis $\mathbf{V} \in \mathbb{R}^{N \times N}$. This means that the signal's so-called graph Fourier transform (GFT) $\bar{\x} = \V^{-1} \x$ is $k$-sparse. There are several choices for $\mathbf{V}$ in literature with most aiming to decompose a graph signal into different modes of variation with respect to the graph topology. For instance, $\mathbf{V} = [\mathbf{v}_{1},\cdots,\mathbf{v}_{N}]$ can be defined via the Jordan decomposition of the adjacency matrix \cite{DSP_freq_analysis,deri2017spectral}, through the eigenvectors of the Laplacian when $\mathcal{G}$ is undirected \cite{shuman2013}, or it can be obtained as the result of an optimization procedure \cite{shafipour2017digraph,sardellitti}. In this paper, we assume that the adjacency matrix $\A = \V \mathbf{\Lambda}\V^{-1}$ is normal which in turn implies $\V$ is unitary and $\V^{-1} = \V^\top$.

Recall that since $\x$ is spectrally sparse, $\bar{\x}$ is sparse with at most $k$ nonzero entries. Let $\K$ 
be the support set of $\bar{\x}$, where $|\K| = k$. Then one can write $\x = \U\bar{\x}_\K$, where 
$\U = \V_{\K,c}$. In the sequel, without loss of generality we assume $\U$ does not contain all-zero 
rows; otherwise, one could omit the all-zero rows of $\U$ and their corresponding nodes from the graph 
as they provide no meaningful information about the graph signals. Moreover, we proceed by assuming that 
the support set $\K$ is known. 

\textbf{Remark 1.} As in the prior work on sampling graph signals 
\cite{puy2016random,chen2016signal,chen2015discrete,chamon2017greedy,marques2016sampling,gama2016rethinking}, 
our proposed schemes require the graph Fourier transform (GFT) bases (i.e., $\V$) as input; this involves eigenvalue 
decomposition of $\A$ which may be computationally intensive for large graphs. The focus of this paper, however, is not on 
the pre-processing step of finding $\V$ but rather on developing efficient sampling algorithms with theoretical performance 
guarantees on the achievable reconstruction error in a variety of settings.
\subsection{Submodularity and weak submodular functions}
An important concept in contemporary combinatorial optimization is the notion of submodular functions that has recently found applications in many signal processing tasks. Relevant concepts are formally defined below. 
\begin{definition}[\textbf{Submodularity and monotonicity}]
	\label{def:submod}
	Let $\X$ be a ground set. Set function $f:2^\X\rightarrow \mathbb{R}$ is submodular if 
	\begin{equation*}
	f(\S\cup \{j\})-f(\S) \geq f(\T\cup \{j\})-f(\T)
	\end{equation*}
	for all subsets $\S\subseteq \T\subset \X$ and $j\in \X\backslash \T$. The term $f_j(\S):=f(\S\cup \{j\})-f(\S)$ is the marginal value of adding element $j$ to set $\S$. Furthermore, $f$ is monotone if $f(\S)\leq f(\T)$ for all $\S\subseteq \T\subseteq \X$.
\end{definition}
In many applications, the objective function of a combinatorial optimization problem of interest is not submodular. The notion of set functions with bounded curvature captures these scenarios by generalizing  the concept of submodularity.
\begin{definition}[\textbf{Curvature}]
	The maximum element-wise curvature of a monotone non-decreasing function $f$ is defined as
	\begin{equation*}
	{\cal C}_{f}=\max_{l\in [N-1]}{\max_{(\S,\T,i)\in \mathcal{\X}_l}{f_i(\T)\slash f_i(\S)}},
	\end{equation*}
	where $\mathcal{\X}_l = \{(\S,\T,i)|\S \subset \T \subset \X, i\in \X \backslash \T, |\T\backslash \S|=l,|\X|=N\}$. 
\end{definition}
The maximum element-wise curvature essentially quantifies  how close the set function is to being submodular. It is worth noting that a set function $f(\S)$ is submodular if and only if its maximum element-wise curvature satisfies 
${\cal C}_{f} \le 1$. When ${\cal C}_{f} > 1$, $f(\S)$ is called a \textit{weak submodular} function.
\section{Sampling of Spectrally Sparse Graph Signals}\label{sec:bl}
In this section, we study the problem of sampling spectrally sparse signals with known support. In 
particular, we assume that a graph signal $\x$ is sparse given a basis $\V$ and that
$\A = \V \mathbf{\Lambda}\V^\top$, where $\A$ is the adjacency matrix of the undirected graph 
$\mathcal{G}$; alternatively, we may use the Laplacian matrix $\mathbf{L}$ to characterize the undirected
graph. We can also consider any orthogonal basis for general directed graphs; see e.g., \cite{shafipour2017digraph}. We first consider the noise-free scenario (Section \ref{SS:noiseless}) and then extend our results to the case of 
sampling and reconstruction from noisy measurements (Section \ref{SS:noisy}).
\subsection{Sampling strategy} \label{SS:noiseless}
In selection sampling (see, e.g.\cite{chen2015discrete}), sampling a graph signal amounts to 
finding a matrix $\C \in \{0,1\}^{k\times N}$ such that $\tilde{\x} = \C\x$, where $\tilde{\x}$ 
denotes the sampled graph signal. Since $\x$ is spectrally sparse with support $\mathcal{K}$ and 
$\x = \U\bar{\x}_\K$, it holds that $\tilde{\x} = \C\U\bar{\x}_\K.$
The original signal can then be reconstructed as
\begin{equation}\label{eq:rec1}
\hat{\x} =\U\bar{\x}_\K= \U(\C\U)^{-1}\tilde{\x}.
\end{equation}
According to \eqref{eq:rec1}, a necessary and sufficient condition for perfect reconstruction (i.e., 
$\hat{\x} = \x$) from noiseless observations is guaranteed by the invertibility of matrix $\C\U$. However, 
as argued in \cite{marques2016sampling,shomorony2014sampling} (see, e.g. Section III-A in 
\cite{marques2016sampling}), current random selection sampling approaches cannot construct 
a sampling matrix to ensure $\C\U$ is invertible for an arbitrary graph; moreover, invertibility of 
$\C\U$ is checked by inspection which for large graphs requires intensive computational effort.
To overcome these issues, motivated by the well-known OMP algorithm in compressed sensing 
\cite{pati1993orthogonal}, we propose a simple iterative scheme with complexity 
$\mathcal{O}(Nk^2)$ that guarantees perfect recovery of $\x$ from the 
sampled signal $\tilde{\x}$.

The proposed approach (see Algorithm \ref{alg:1}) works as follows. First, the algorithm chooses 
a node of the graph with index $\ell$ as a {\it residual node} such that $\ell  = \argmin_{j\in [N]} \|\u_j\| $. \footnote{Intuitively, this node has weaker representative power compared to other point and since in Algorithm 1 the residual node is excluded from the selection procedure, this choice empirically leads to smaller reconstruction error in noisy scenario.} Then, 
in the $i\ts{th}$ iteration the algorithm identifies a node -- excluding the residual node --  with index $s_j$ to be included in the 
sampling set $\S$ according to 
\begin{equation}\label{eq:gcss}
s_j = \argmax_{j \in {\mathcal{N}\backslash{\ell}} \backslash\S}\frac{|\r_{i-1}^\top \u_j|^2}{\|\u_j\|_2^2},
\end{equation}
where $\r_i = \mathbf{P}_\S^\bot\u_\ell$ is a residual vector initialized as $\r_0 = \u_\ell$, and $\mathbf{P}_\S^\bot=\I_n-\U_{\S,r}^\top 
(\U_{\S,r}^\top)^\dagger$. This procedure is 
repeated for $k$ iterations to construct the sampling set $\S$.

\textbf{Remark 2.} Optimization \eqref{eq:gcss} is related to the greedy column subset selection 
approach in \cite{farahat2015greedy}. Specifically, both methods attempt to identify a subset of the rows/columns 
 that best represent the entire matrix. However, they focus on different applications which in turn 
results in different definitions of the residuals. In \cite{farahat2015greedy}, the residual is defined as the original 
matrix itself. Hence the computational complexity of the greedy approach in \cite{farahat2015greedy} is 
significantly higher than that of Algorithm 1 where the residual is merely a vector.

Theorem \ref{thm:iss} demonstrates that Algorithm \ref{alg:1} returns a sampling set which ensures 
perfect recovery of the graph signal $\x$ in the noise-free scenario.
\begin{theorem}\label{thm:iss}
Let $\S$ denote the sampling set constructed by Algorithm \ref{alg:1} and let $\C$ be the corresponding 
sampling matrix such that $|\S| = k$. Then, matrix $\C\U$ is always invertible.
\end{theorem}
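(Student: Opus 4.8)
The plan is to recast the invertibility of $\C\U$ as a linear‑independence statement and then run the standard orthogonal matching pursuit correctness argument. Since $\C\in\{0,1\}^{k\times N}$ merely extracts the rows of $\U$ indexed by the sampling set $\S$, we have $\C\U=\U_{\S,r}\in\R^{k\times k}$, so it suffices to prove that the $k$ rows $\{\u_{j}:j\in\S\}$ chosen by Algorithm~\ref{alg:1} are linearly independent. First I would record the two structural facts that drive everything: because $\V$ is unitary, $\U=\V_{\K,c}$ has orthonormal columns, whence $\U^\top\U=\I_k$, $\U$ has rank $k$, and its rows span all of $\R^k$; moreover, by the standing assumption $\U$ has no all‑zero rows, so in particular $\r_0=\u_\ell\neq\mathbf 0$.

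Next I would argue by induction on the iteration index, writing $\S_i$ for the set $\S$ after iteration $i$, with the invariant: the chosen rows $\{\u_j:j\in\S_i\}$ are linearly independent (so $\dim\mathrm{span}\{\u_j:j\in\S_i\}=i$) and the residual $\r_i=\P_{\S_i}^\bot\u_\ell$ is orthogonal to that span. Orthogonality is immediate from the definition of $\P_{\S_i}^\bot$ as the projector onto the orthogonal complement of the row space of $\U_{\S_i,r}$. For the independence step, suppose $\r_{i-1}\neq\mathbf 0$. Every vector $\v\in\mathrm{span}\{\u_j:j\in\S_{i-1}\}$ satisfies $\r_{i-1}^\top\v=0$; on the other hand, were $\r_{i-1}^\top\u_j=0$ for all $j\in\mathcal N\setminus\S_{i-1}$ as well, then $\r_{i-1}$ would be orthogonal to every row of $\U$ and hence to all of $\R^k$, forcing $\r_{i-1}=\mathbf 0$, a contradiction. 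Thus the maximal objective $|\r_{i-1}^\top\u_j|^2/\|\u_j\|_2^2$ is strictly positive, and the selected index $i_s$ obeys $\r_{i-1}^\top\u_{i_s}\neq 0$. Since $\r_{i-1}$ annihilates the current span but not $\u_{i_s}$, the row $\u_{i_s}$ cannot lie in $\mathrm{span}\{\u_j:j\in\S_{i-1}\}$, so adding it raises the rank by one and preserves independence.

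The remaining — and most delicate — ingredient is to certify that the residual never vanishes prematurely, i.e. that $\r_{i-1}\neq\mathbf 0$ for every $i\le k$, which is exactly what powers the independence step above. While $|\S_{i-1}|<k$, the (independent) chosen rows span only a proper subspace of $\R^k$, and $\r_{i-1}=\P_{\S_{i-1}}^\bot\u_\ell$ is the component of the nonzero target $\u_\ell$ orthogonal to that subspace; the induction can break only if $\u_\ell$ happens to fall into the span of fewer than $k$ selected rows, so the crux is to rule this out and show the residual can be exhausted only once the selected rows span the full $k$‑dimensional space. Granting this, after exactly $k$ iterations the invariant yields $k$ linearly independent rows, so $\U_{\S,r}=\C\U$ is a nonsingular $k\times k$ matrix and, via \eqref{eq:rec1}, perfect reconstruction follows. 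I expect this nonvanishing‑residual guarantee to be the main obstacle; the linear‑algebraic reduction and the correlation/orthogonality step are then routine, mirroring the exact‑recovery analysis of OMP.
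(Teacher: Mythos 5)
Your reduction of the theorem to linear independence of the selected rows, and your inductive step, coincide with the paper's proof in Appendix \ref{pf:iss}: the residual $\r_{i-1}$ is orthogonal to the span of the rows already chosen, so any row with $\r_{i-1}^\top\u_{i_s}\neq 0$ lies outside that span and the rank grows by one. Your observation that a nonzero residual forces a strictly positive maximum of the selection criterion (otherwise $\r_{i-1}$ would be orthogonal to all $N$ rows, which span $\R^k$ because $\U$ has $k$ orthonormal columns) is in fact a cleaner justification than the one written in the paper. The problem is that the one claim you defer --- that $\r_{i-1}\neq\mathbf{0}$ for every $i\leq k$ --- carries the entire weight of the theorem, and ``granting this'' does not discharge it. The paper's argument for it is by contradiction: if the maximal correlation vanishes at some iteration $i\leq k$, then $\r_{i-1}$ is orthogonal to every unselected row and, by construction of $\P_\S^\bot$, to every selected row, hence $\r_{i-1}=\mathbf{0}$; the paper then asserts that this places all remaining rows in the span of the $i-1$ selected ones and so forces $\mathrm{rank}(\U)\leq k-1$, contradicting full column rank.

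You are right to flag this step as the delicate one, and you should not close the gap by simply transcribing the paper's justification. The condition $\r_{i-1}=\P_\S^\bot\u_\ell=\mathbf{0}$ only says that the single reference row $\u_\ell$ lies in the span of the selected rows; it does not by itself place the \emph{other} unselected rows there, so the jump to $\mathrm{rank}(\U)\leq k-1$ requires more than is written. Worse, the residual genuinely can die early: at iteration $1$, Cauchy--Schwarz gives $|\r_0^\top\u_j|^2/\|\u_j\|_2^2\leq\|\u_\ell\|_2^2$ with equality at $j=\ell$, so $\ell$ itself is always a maximizer of the selection rule; if it is selected, then $\r_1=\P_{\{\ell\}}^\bot\u_\ell=\mathbf{0}$ and every subsequent iteration maximizes an identically zero criterion, which guarantees nothing about independence. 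A complete proof therefore has to supply something your sketch leaves out: either a tie-breaking/restart convention (e.g.\ re-initializing the residual with a row not yet in the selected span whenever it vanishes, together with the observation that such a row exists while $|\S|<k$ since the rows span $\R^k$), or some other mechanism ensuring the residual survives $k$ iterations. As it stands, the non-vanishing of the residual is not a routine detail that can be postponed; it is the missing proof.
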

\begin{proof}
See Appendix \ref{pf:iss}.
\end{proof}
Theorem \ref{thm:iss} states that as long as the adjacency matrix $\A$ is normal, the proposed selection 
scheme guarantees perfect reconstruction of the original signal from its noiseless samples.  
Therefore,  in contrast to existing random selection sampling and aggregation sampling schemes \cite{chen2015discrete,puy2016random,marques2016sampling} that require strong conditions on $\A$ 
(e.g., eigenvalues of $\A$ to be distinct), Algorithm \ref{alg:1} guarantees recovery for a wider class of 
graphs.
\subsection{Complexity analysis}
The worst-case computational complexity of Algorithm \ref{alg:1} is analyzed next. In the $i\ts{th}$ iteration, 
step 6 costs $\mathcal{O}(k(N-i))$ as one needs to search over $N-i$ rows of $\U$ and compute inner-products 
of $k$-dimensional vectors in order to evaluate the selection criterion. Step 8 is a matrix-vector product whose
complexity is $\mathcal{O}(k^2)$. Note that in our implementation we use the modified Gram-Schmidt (MGS) 
algorithm to update the residual vector with a significantly lower complexity of $\mathcal{O}(ki)$. Thus, the total 
cost of the $i\ts{th}$ iteration is $\mathcal{O}(k(N-i)+ki) = \mathcal{O}(k(N-i))$. Since $i\leq k$ and there are $k$ iterations, the overall complexity of Algorithm \ref{alg:1} is $\mathcal{O}(Nk^2)$. {\color{black}Please refer to Table \ref{table2} for a comparison between computational costs of proposed schemes in this paper to the existing methods.}
\renewcommand\algorithmicdo{}	
\begin{algorithm}[t]
	\caption{Iterative Selection Sampling}
	\label{alg:1}
	\begin{algorithmic}[1]
		\STATE \textbf{Input:} $\U$, $k$, number of samples $m\geq k$.
		\STATE \textbf{Output:} Subset $S\subseteq \mathcal{N} $ with $|\S|=m$.
		\STATE Initialize $\S =  \emptyset$, $\r_0 = \u_\ell$ for $\ell  = \argmin_{j\in [N]} \|\u_j\| $, and $i = 0$.
		\WHILE{$|\S|<m$}\vspace{0.2cm}
		\STATE $i\leftarrow i+1$\vspace{0.2cm}
		\STATE $s_i = \argmax_{j \in {\mathcal{N}\backslash\{\ell\}} \backslash\S}\frac{|\r_{i-1}^\top \u_j|^2}{\|\u_j\|_2^2}$\vspace{0.2cm}
		\STATE Set $\S \leftarrow \S\cup \{s_i\}$\vspace{0.2cm}
		\STATE $\r_i = \mathbf{P}_\S^\bot\u_\ell$\vspace{0.2cm}
		\ENDWHILE
		\RETURN $\S$.
	\end{algorithmic}
\end{algorithm}
\subsection{Sampling in the presence of noise} \label{SS:noisy}
Here we provide an extension of the proposed selection sampling scheme to the scenarios where only noisy 
observations of the graph nodes are available. Note that due to noise, perfect reconstruction is no 
longer possible. Nevertheless, we provide an upper bound on the reconstruction error of the proposed 
sampling scheme as a function of the noise covariance and the sampling matrix $\C$. Another distinguishing
aspect of sampling and reconstruction in the presence of noise is that, to achieve better reconstruction 
accuracy, it may be desirable to select $m\geq k$ nodes as the sampling set. This stands in contrast to 
the noiseless case where, as we proved, $m= k$ sampling nodes are sufficient for perfect reconstruction 
if the sampling set is constructed by Algorithm \ref{alg:1}.

Let $\y = \x + \n$ be the noise-corrupted signal, where $\n \in \R^{N}$ denotes the zero-mean 
noise vector with covariance matrix $\E[\n\n^\top]=\Q$. We also assume that the support $\K$ 
is known. Therefore, since $\x = \U\bar{\x}_\K$, the samples $\tilde{\x}$ and the non-zero 
frequency components of $\x$ are related via the linear model
\begin{equation}\label{eq:modeln}
\tilde{\x} = \y_\S = \U_{\S,r}\bar{\x}_\K +\n_{\S},
\end{equation}
where $\U_{\S,r} = \C\U$, $\y_\S = \C\y$, and  $\n_\S = \C\n$. The reconstructed signal in the Fourier domain is found by seeking the least square solution and satisfies the normal equation \cite{kay1993fundamentals},
\begin{equation}\label{eq:normal}
\U_{\S,r}^\top\Q_\S^{-1}\U_{\S,r}\hat{\bar{\x}} = \U_{\S,r}^\top \Q_\S^{-1} \tilde{\x},
\end{equation}
where $\Q_\S = \C\Q\C^\top$ is the covariance of $\n_\S$.

 If $\U_{\S,r}^\top\Q_\S^{-1}\U_{\S,r}$ is invertible, we can recover the original graph signal up to an error term as stated~in~the~following~proposition.
\begin{proposition}\label{thm:nois}
Let $\S$ be the sampling set constructed by Algorithm \ref{alg:1} and let $\C$ be the corresponding sampling 
matrix. Moreover, let us denote $\U_{\S,r} = \C\U$. Then, with probability one the matrix 
$\U_{\S,r}^\top\Q_\S^{-1}\U_{\S,r}$ is invertible. Furthermore, if $\|\n\|_2 \leq \epsilon_\n$, the reconstruction 
error of the signal reconstructed from $\S$ satisfies
\begin{equation}\label{eq:boundn}
\|\hat{\x}-\x\|_2 \leq \sigma_{\max}((\U_{\S,r}^\top\Q_\S^{-1}\U_{\S,r})^{-1}\U_{\S,r}^\top \Q_\S^{-1})\epsilon_\n,
\end{equation}
where $\sigma_{\max}(.)$ outputs the maximum singular value of its matrix argument.
\end{proposition}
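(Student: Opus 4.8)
The plan is to treat the two claims separately: first the invertibility of $\U_{\S,r}^\top\Q_\S^{-1}\U_{\S,r}$, and then the error bound \eqref{eq:boundn}, which follows mechanically once the weighted least-squares solution of the normal equation \eqref{eq:normal} is written out explicitly. For invertibility I would show the stronger statement that $\U_{\S,r}^\top\Q_\S^{-1}\U_{\S,r}$ is positive definite. Two ingredients suffice. First, $\Q_\S = \C\Q\C^\top$ is positive definite: under the adopted noise model the per-node noise powers are strictly positive, so the principal submatrix of $\Q$ picked out by the selection matrix $\C$ is positive definite, and hence so is $\Q_\S^{-1}$. Second, $\U_{\S,r} = \C\U$ has full column rank $k$; this is exactly where Theorem \ref{thm:iss} enters, since the first $k$ nodes selected by Algorithm \ref{alg:1} already form an invertible $k\times k$ block of $\U_{\S,r}$, so the (possibly taller, $m\times k$) matrix contains an invertible $k\times k$ submatrix and is therefore full column rank. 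Combining these, for any nonzero $\v\in\R^k$ we get $\v^\top\U_{\S,r}^\top\Q_\S^{-1}\U_{\S,r}\v = (\U_{\S,r}\v)^\top\Q_\S^{-1}(\U_{\S,r}\v) > 0$ because $\U_{\S,r}\v\neq 0$ and $\Q_\S^{-1}\succ 0$, which establishes invertibility.

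For the bound, I would solve \eqref{eq:normal} as $\hat{\bar{\x}} = (\U_{\S,r}^\top\Q_\S^{-1}\U_{\S,r})^{-1}\U_{\S,r}^\top\Q_\S^{-1}\tilde{\x}$ and substitute the linear model $\tilde{\x} = \U_{\S,r}\bar{\x}_\K + \n_\S$ from \eqref{eq:modeln}. The signal term telescopes since $(\U_{\S,r}^\top\Q_\S^{-1}\U_{\S,r})^{-1}\U_{\S,r}^\top\Q_\S^{-1}\U_{\S,r} = \I_k$, leaving $\hat{\bar{\x}} - \bar{\x}_\K = (\U_{\S,r}^\top\Q_\S^{-1}\U_{\S,r})^{-1}\U_{\S,r}^\top\Q_\S^{-1}\n_\S$. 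Mapping back to the nodal domain via $\hat{\x} = \U\hat{\bar{\x}}$ and $\x = \U\bar{\x}_\K$, and using that $\U = \V_{\K,c}$ has orthonormal columns ($\U^\top\U = \I_k$, as $\V$ is unitary), the linear map $\U$ is an isometry, so $\|\hat{\x}-\x\|_2 = \|(\U_{\S,r}^\top\Q_\S^{-1}\U_{\S,r})^{-1}\U_{\S,r}^\top\Q_\S^{-1}\n_\S\|_2$. A spectral-norm bound then gives $\|\hat{\x}-\x\|_2 \leq \sigma_{\max}((\U_{\S,r}^\top\Q_\S^{-1}\U_{\S,r})^{-1}\U_{\S,r}^\top\Q_\S^{-1})\|\n_\S\|_2$, and since $\C$ merely extracts a subset of the entries of $\n$ we have $\|\n_\S\|_2 = \|\C\n\|_2 \leq \|\n\|_2 \leq \epsilon_\n$, yielding \eqref{eq:boundn}.

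I expect the invertibility claim, rather than the bound, to be the delicate part; the bound is essentially bookkeeping once the estimator is written out. In particular, some care is needed to pin down what the ``with probability one'' qualifier is asserting, and to justify that $\Q_\S$ is strictly positive definite; the cleanest route is to lean on the noise model enforcing positive per-node noise powers together with the deterministic full-rank guarantee of Theorem \ref{thm:iss}. The one remaining step worth verifying explicitly is the isometry $\U^\top\U = \I_k$, which is what lets the nodal-domain error inherit the Fourier-domain error norm without introducing an extra multiplicative constant.
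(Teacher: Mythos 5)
Your proposal is correct and follows essentially the same route as the paper: invertibility via full column rank of $\U_{\S,r}$ (from Theorem \ref{thm:iss}) combined with positive definiteness of $\Q_\S^{-1}$, and the bound by solving the normal equations, substituting the noisy linear model, and using $\|\n_\S\|_2\leq\|\n\|_2\leq\epsilon_\n$. The only cosmetic differences are that the paper establishes invertibility through a Cholesky factorization $\Q_\S^{-1}=\L\L^\top$ rather than your quadratic-form argument, and it bounds $\|\U(\cdot)\n_\S\|_2$ via submultiplicativity with $\sigma_{\max}(\U)=1$ rather than invoking the isometry $\U^\top\U=\I_k$ directly.
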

\begin{proof}
See Appendix \ref{pf:thmnois}.
 \end{proof}

Compared to the noiseless scenario where the main challenge is to ensure that
$\C\U$ is invertible, in the presence of noise we are interested in finding a sampling scheme with 
the lowest reconstruction error. Although Proposition \ref{thm:nois} provides a performance bound 
for any sampling matrix $\C$ constructed by Algorithm \ref{alg:1}, our specific choice of the residual 
node, $\ell  = \argmin_{j\in [N]} \u_j$, is not exploited in the proof of Proposition \ref{thm:nois} 
and further analysis along those lines is left as part of the future work. We empirically 
observed that with the proposed choice of the residual, the matrix product on the right-hand side of 
\eqref{eq:boundn} has smaller maximum singular value than if the residual node is selected uniformly 
at random. We also note that the statistics of noise is not exploited when constructing $\C$. This is
similar to state-of-the-art random selection sampling and aggregation sampling schemes 
\cite{chen2015discrete,puy2016random,marques2016sampling} where one needs to rely on 
exhaustive search over the space of all sampling matrices to find the one that results in the lowest 
MSE. In the Bayesian setting studied in Section \ref{sec:alg} where one assumes a prior distribution 
on $\x$, the original signal can be reconstructed up to an error term for any $\C \in \R^{m\times N}$ 
with $m\geq k$. Therefore, invertibility of $\C\U$ is not a concern in the Bayesian case where we 
focus on the construction of a sampling set $\S$ with the lowest~reconstruction~error.

Note that the Gaussian elimination scheme also finds a full-rank submatrix $\U_\S$. According to 
\cite{anis2016efficient}, the sampling set found by Gaussian elimination with partial row pivoting 
corresponds to indices of the pivot rows. Therefore, in contrast to Algorithm 1 that takes into 
accounts representative power of each node in all frequency components (by considering the 
$\ell_2$ norm of $\u_j$'s and their correlation with the residual), Gaussian elimination with partial 
row pivoting only considers individual frequency components when forming the sampling set. 
Hence, the signal reconstructed by such a scheme may not be robust to noise statistics. On the 
other hand, by choosing the residual node according to $\ell  = \argmin_{j\in [N]} \u_j$, 
Algorithm 1 finds an invertible submatrix and further finds a subset of rows of $\U$ with 
strong representation capability.
\section{Bayesian Sampling of Graph Signals}\label{sec:alg}

\begin{table}[t]\centering
	{\color{black}
		\caption{Computational complexity comparison between the proposed algorithms and the existing methods.}
		\ra{1}
		\begin{tabular*}{0.75\linewidth}{@{}ccccccc@{}}\toprule
			Algorithm &\phantom{1111}&Setting&\phantom{1111}&Complexity\\ \midrule
			Proposed Algorithm 1&& non-Bayesian&&$\mathcal{O}(k^2N)$ \\\midrule	
			Greedy \cite{chen2015discrete}&& non-Bayesian &&$\mathcal{O}(k^4N)$\\\midrule	
			Greedy \cite{chamon2017greedy}&& Bayesian &&${\cal O}(Nk^3)$\\\midrule		
			Proposed Algorithm 2&& Bayesian&&${\cal O}(Nk^2)$\\
			\bottomrule
		\end{tabular*} 
	\label{table2}
	}
\end{table}
So far we have considered the problem of sampling in scenarios where the graph signal is not stochastic. In this 
section, we consider the problem of sampling and interpolation in a Bayesian setting where the graph signal is a 
non-stationary network process. To this end, we adopt the following definition of stationarity, recently proposed in \cite{perraudinstationary2016}.
\begin{definition}
	A stochastic graph signal $\x$ is graph wide-sense stationary (GWSS) if and only if the matrix
	\begin{equation}
	\E[\bar{\x}\bar{\x}^\top] = \mathbf{V}^\top\E[{\x}{\x}^\top]\mathbf{V}
	\end{equation}
	is diagonal.
\end{definition}

In addition to our novel algorithmic contributions, the setting we consider in this section is more general than those 
considered in \cite{chamon2017greedy,ma,chamon2017approximate,chamon2017mean}. Specifically, unlike
the prior work \cite{chamon2017greedy}, we assume that the signal in not necessarily stationary with respect to 
$\mathcal{G}$ and that $\bar{\x}$ is a zero-mean random vector with generally non-diagonal covariance matrix 
$\E[\bar{\x}\bar{\x}^\top]$ = $\P$. Furthermore, we do not restrict our study to the case of additive white noise. Rather, 
we consider a more practical setting where the noise terms are independent but the noise power varies across
individual nodes of the graph. That is, if $\y = \x + \n$ denotes the noise-corrupted signal, $\n \in \R^{N}$ is a 
zero-mean noise vector with covariance matrix $\E[\n\n^\top]=\Q =  \mathrm{diag}(\sigma_1^2,\dots,\sigma_N^2)$. 
Note that this particular scenario is not explored in \cite{chamon2017greedy} or the related sensor selection 
and experimental design schemes \cite{ma,chamon2017approximate,chamon2017mean}.

Let $\S$ denote a sampling set of $m\geq k$ graph nodes. Since $\x = \U\bar{\x}_\K$, the samples $\y_\S$ and the 
non-zero frequency components of $\x$ are related via the Bayesian linear model
\begin{equation}\label{eq:model}
\y_\S = \U_{\S,r}\bar{\x}_\K +\n_{\S}.
\end{equation}
As before, in order to find $\hat{\x}$ it suffices to estimate $\bar{\x}_\K$ based on $\y_\S$. The least mean-square 
estimator of $\bar{\x}_\K$, denoted by $\hat{\bar{\x}}_{\K}$, is the Bayesian counterparts of the normal equations 
in the Gauss-Markov theorem (see, e.g.  \cite[Ch. 10]{kay1993fundamentals}). In other words, it is given by
\begin{equation}\label{eq:estf}
\hat{\bar{\x}}_\K = \bar{\Sb}_\S \U_{\S,r}^\top\Q_\S^{-1} \y_\S,
\end{equation}
where 
\begin{equation}\label{eq:covf}
\begin{aligned}
\bar{\Sb}_\S &= \left(\P^{-1}+\U_{\S,r}^\top\Q_\S^{-1}\U_{\S,r}\right)^{-1} \\
&= \left(\P^{-1}+\sum_{j\in\S}\frac{1}{\sigma^2_j}\u_j\u_j^\top\right)^{-1}
\end{aligned}
\end{equation}
is the error covariance matrix of $\hat{\bar{\x}}_{\K}$. Therefore, $\hat{\x} = \U\hat{\bar{\x}}_{\K}$ and its error covariance matrix is ${\Sb}_\S = \U \bar{\Sb}_\S \U^\top$.

The problem of sampling for near-optimal reconstruction can now be formulated as the task of choosing $\S$ so as to minimize the MSE of the estimator $\hat{\x}$. Since the MSE is defined as the trace of the error covariance matrix, we arrive at the following optimization problem,
\begin{equation}\label{eq:probs}
\begin{aligned}
& \underset{\S}{\text{min}}
\quad \mathrm{Tr}\left({\mathbf{\Sigma}}_\S\right)
& \text{s.t.}\quad \S \subseteq \mathcal{N}, \phantom{k}|\S|\leq m.
\end{aligned}
\end{equation}
Using trace properties and the fact that $\U^\top\U = \I_m$, \eqref{eq:probs} simplifies to
\begin{equation}\label{eq:probf}
\begin{aligned}
& \underset{\S}{\text{min}}
\quad \mathrm{Tr}\left(\mathbf{\bar{\Sigma}}_\S\right)
& \text{s.t.}\quad \S \subseteq \mathcal{N}, \phantom{k}|\S| \leq m.
\end{aligned}
\end{equation}
The optimization problem \eqref{eq:probf} is NP-hard and evaluating all $\genfrac(){0pt}{2}{N}{m}$ possibilities to 
find the exact solution is intractable even for relatively small graphs. To this end, we propose an alternative 
to find a near-optimal solution in polynomial time. In \cite{chamon2017greedy}, similar to the greedy sensor 
selection approach of \cite{shamaiah2010greedy,shamaiah2012greedy}, a greedy algorithm is proposed for the 
described Bayesian setting and its performance is analyzed under the assumption that the graph signal is stationary 
and the noise is white. In applications dealing with extremely large graphs, the greedy algorithm in 
\cite{chamon2017greedy} might be computationally infeasible. Moreover, 
the graph signal is not necessary stationary and, perhaps more importantly, different nodes of a graph may experience 
different levels of noise. To address these challenges, motivated by the algorithm recently developed in 
\cite{mirzasoleiman2014lazier} for maximization of {\it strictly} submodular functions, we develop a randomized-greedy 
algorithm for Bayesian sampling of graph signals that is significantly faster than the greedy algorithm. In addition, 
by leveraging the notion of weak submodularity, we establish performance bounds for the general setting of 
non-stationary graph signals. 
\subsection{Randomized-greedy selection sampling}
Following \cite{shamaiah2010greedy,shamaiah2012greedy,chamon2017greedy}, we start by formulating 
\eqref{eq:probf} as a  set function maximization task. Let $f(\S) = \mathrm{Tr}(\P-\bar{\Sb}_\S)$. Then
\eqref{eq:probf} can equivalently be written as
\begin{equation}\label{eq:probsub}
\begin{aligned}
& \underset{\S}{\text{max}}
\quad f(\S)
& \text{s.t.}\quad \S\subseteq\mathcal{N},\quad |\S| \leq m.
\end{aligned}
\end{equation}

In Proposition \oldref{thm:p} below, by applying the matrix inversion lemma 
\cite{bellman1997introduction} we establish that $f(\S)$ is monotone and weakly submodular. 
Moreover, we derive an efficient recursion to find the marginal gain of adding a new node 
to the sampling set $\S$. 
\begin{proposition}\label{thm:p}
\textit{$f(\S) = \mathrm{Tr}(\P-\bar{\Sb}_\S)$ is a weak submodular, monotonically increasing set function, $f(\emptyset)=0$, and for all $j \in \mathcal{N}\backslash \S$
\begin{equation}\label{eq:mg}
f(\S\cup \{j\})-f(\S) = \frac{\u_j^\top\bar{\Sb}_S^{2}\u_j}{\sigma_j^{2}+\u_j^\top\bar{\Sb}_\S\u_j},\: \text{ and }
\end{equation}
\begin{equation}\label{eq:upf}
\bar{\Sb}_{\S \cup\{j\}} = \bar{\Sb}_\S-\frac{\bar{\Sb}_{\S}\u_{j}\u_{j}^\top\bar{\Sb}_{\S}}{\sigma_j^2+\u_{j}^\top\bar{\Sb}_{\S}\u_{j}}.
\end{equation}}
\end{proposition}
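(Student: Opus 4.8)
The plan is to establish the two closed-form identities first and then derive monotonicity and weak submodularity from them. Since the sum in \eqref{eq:covf} is empty for $\S=\emptyset$, we have $\bar{\Sb}_\emptyset = (\P^{-1})^{-1} = \P$ and hence $f(\emptyset) = \mathrm{Tr}(\P-\P) = 0$. For the update \eqref{eq:upf}, I would note that appending a node $j$ adds a single rank-one term to the inverse error covariance,
\[
\bar{\Sb}_{\S\cup\{j\}}^{-1} = \bar{\Sb}_\S^{-1} + \frac{1}{\sigma_j^2}\u_j\u_j^\top,
\]
and apply the Sherman--Morrison form of the matrix inversion lemma \cite{bellman1997introduction} to this rank-one perturbation, which yields \eqref{eq:upf} directly. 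The marginal gain then follows by subtracting traces, $f(\S\cup\{j\})-f(\S) = \mathrm{Tr}(\bar{\Sb}_\S-\bar{\Sb}_{\S\cup\{j\}})$, substituting \eqref{eq:upf}, and using the cyclic property of the trace together with $\mathrm{Tr}(\bar{\Sb}_\S\u_j\u_j^\top\bar{\Sb}_\S) = \u_j^\top\bar{\Sb}_\S^2\u_j$ to arrive at \eqref{eq:mg}.

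Monotonicity is then almost immediate. Because $\P\succ 0$ and each added term $\sigma_j^{-2}\u_j\u_j^\top$ is positive semidefinite, $\bar{\Sb}_\S^{-1}\succ 0$ and hence $\bar{\Sb}_\S\succ 0$ for every $\S$. Consequently the numerator $\u_j^\top\bar{\Sb}_\S^2\u_j\ge 0$ and the denominator $\sigma_j^2+\u_j^\top\bar{\Sb}_\S\u_j>0$ in \eqref{eq:mg}, so every single-element marginal is nonnegative. Inserting the elements of $\T\backslash\S$ one at a time and chaining these nonnegative marginals gives $f(\S)\le f(\T)$ whenever $\S\subseteq\T$, which is monotonicity.

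The crux is weak submodularity, i.e., showing the maximum element-wise curvature ${\cal C}_f$ is finite. Here I would control the ratio $f_i(\T)\slash f_i(\S)$ for $\S\subset\T$ and $i\in\mathcal{N}\backslash\T$ through uniform spectral bounds on the family $\{\bar{\Sb}_\S\}$. Since $\bar{\Sb}_\S^{-1}\succeq\P^{-1}$ we get the upper bound $\bar{\Sb}_\S\preceq\lambda_{max}(\P)\I$, and since $\bar{\Sb}_\S^{-1}\preceq\P^{-1}+\sum_{j\in\mathcal{N}}\sigma_j^{-2}\u_j\u_j^\top$ for every $\S$, we obtain a uniform lower bound $\bar{\Sb}_\S\succeq\mu\I$ with $\mu = (\lambda_{max}(\P^{-1}+\sum_{j\in\mathcal{N}}\sigma_j^{-2}\u_j\u_j^\top))^{-1}>0$. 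Expressing both $f_i(\T)$ and $f_i(\S)$ via \eqref{eq:mg} and replacing the quadratic forms by their extremal eigenvalue bounds then caps the ratio by a finite constant depending only on $\lambda_{max}(\P)$, $\mu$, $\sigma_i^2$, and $\|\u_i\|_2$. The delicate point, which I expect to be the main obstacle, is the numerator term $\u_i^\top\bar{\Sb}_\T^2\u_i$: because squaring is not operator-monotone, one cannot pass from $\bar{\Sb}_\T\preceq\bar{\Sb}_\S$ to $\bar{\Sb}_\T^2\preceq\bar{\Sb}_\S^2$, so I would avoid any matrix ordering on the squared covariances and instead bound $\u_i^\top\bar{\Sb}_\T^2\u_i$ and $\u_i^\top\bar{\Sb}_\S^2\u_i$ separately through the scalar bounds $\mu^2\|\u_i\|_2^2\le\u_i^\top\bar{\Sb}_\S^2\u_i\le\lambda_{max}(\P)^2\|\u_i\|_2^2$. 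Finiteness of the resulting constant yields ${\cal C}_f<\infty$, which establishes weak submodularity and completes the proof.
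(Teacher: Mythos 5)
Your derivation of the two identities and of monotonicity coincides with the paper's: both obtain \eqref{eq:upf} by applying the Sherman--Morrison formula to the rank-one update $\bar{\Sb}_{\S\cup\{j\}}^{-1}=\bar{\Sb}_\S^{-1}+\sigma_j^{-2}\u_j\u_j^\top$, then read off \eqref{eq:mg} from the trace of the correction term via the cyclic property, and conclude monotonicity from positive definiteness of $\bar{\Sb}_\S$. The one structural difference is in how weak submodularity is handled. The paper's proof of the proposition actually stops at monotonicity and never argues boundedness of the curvature there; that claim is deferred to Theorem \ref{thm:curv}, whose separate proof bounds $f_j(\T)/f_j(\S)$ by sandwiching the spectra of $\bar{\Sb}_\S$ and $\bar{\Sb}_\T$ via Courant--Fischer and Weyl's inequality, yielding the explicit constant in \eqref{eq:curvbound}. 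You instead supply the curvature bound inline, using essentially the same spectral-sandwich idea but with a uniform lower bound $\bar{\Sb}_\S\succeq\mu\I$ built from the full sum $\P^{-1}+\sum_{j\in\mathcal{N}}\sigma_j^{-2}\u_j\u_j^\top$; this is valid (given the standing assumption that $\U$ has no all-zero rows, so $\u_i\neq\mathbf{0}$) and your caution about squaring not being operator-monotone is well placed --- the paper sidesteps the same issue by bounding $\lambda_{\max}(\bar{\Sb}_\T^2)$ and $\lambda_{\min}(\bar{\Sb}_\S^2)$ directly rather than comparing $\bar{\Sb}_\T^2$ and $\bar{\Sb}_\S^2$ in the Loewner order. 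Your constant is somewhat looser and less explicit than the paper's, but finiteness is all the proposition needs, so your argument is self-contained where the paper's relies on a forward reference.
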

\begin{proof}
See Appendix \ref{pf:mono}.
\end{proof}
Proposition \ref{thm:p} enables efficient construction of the sampling set in an iterative fashion. To further reduce 
the computational cost, we propose a randomized-greedy algorithm for selection sampling with minimal MSE that selects a sampling set in the following way. 
Starting with $\S = \emptyset$, at iteration $(i+1)$ of the algorithm, a subset $\mathcal{R}$ of size $s$ is sampled 
uniformly at random and without replacement from $\mathcal{N} \backslash \S$. The marginal gain of each node in 
$\mathcal{R}$ is found using \eqref{eq:mg}, and the one corresponding to the highest marginal gain is added to $\S$. 
Then, the algorithm employs \eqref{eq:upf} to update $\bar{\Sb}_\S$ for the subsequent iteration. This procedure is 
repeated until some stopping criteria, e.g., a condition on the cardinality of $\S$ is met. Regarding $s$, we follow the 
suggestion in \cite{mirzasoleiman2014lazier} and set $s=\frac{N}{m}\log\frac{1}{\epsilon}$ where 
$e^{-m}\leq \epsilon<1$ is a predetermined parameter that controls the trade-off between the computational cost 
and MSE of the reconstructed signal; randomized-greedy algorithm with smaller $\epsilon$ produces sampling solutions 
with lower MSE while the one with larger $\epsilon$ requires lower computational cost. Note that if $\epsilon = e^{-m}$, 
the randomized-greedy algorithm in each iteration considers all the available nodes and hence matches the greedy 
scheme in \cite{chamon2017greedy}. However, as we illustrate 
in our simulation studies, the proposed randomized-greedy algorithm is significantly faster than the greedy method in \cite{chamon2017greedy} for large $\epsilon$ while returning essentially 
the same sampling solution. The randomized-greedy algorithm is formalized as Algorithm \ref{alg:greedy}.
\subsection{Complexity analysis}
To take a closer look at computational complexity of Algorithm \ref{alg:greedy}, note that step 6 costs $\mathcal{O}(\frac{N}{m}k^2\log(\frac{1}{\epsilon}))$ since one needs to compute 
$\frac{N}{m}\log(\frac{1}{\epsilon})$ marginal gains, each requiring ${\cal O}(k^2)$ operations. 
Furthermore, step 7 requires ${\cal O}(k^2)$ arithmetic operations. Since there are $m$ such 
iterations, running time of Algorithm \ref{alg:greedy} is ${\cal O}(Nk^2\log(\frac{1}{\epsilon}))$. {\color{black}Please refer to Table \ref{table2} for a comparison between computational costs of proposed schemes in this paper to the existing methods.}
\renewcommand\algorithmicdo{}	
\begin{algorithm}[t]
\caption{Randomized-greedy Graph Sampling}
\label{alg:greedy}
\begin{algorithmic}[1]
    \STATE \textbf{Input:}  $\P$, $\U$, $m$, $\epsilon$.
    \STATE \textbf{Output:} Subset $S\subseteq \mathcal{N} $ with $|S|=m \geq k$.
    \STATE Initialize $\S =  \emptyset$, $\bar{\Sb}_{\S}=\P$.
	\WHILE{$|S|<m$}\vspace{0.2cm}
			\STATE Choose $\mathcal{R}$ by sampling $s=\frac{N}{m}\log{(1/\epsilon)}$ indices uniformly at random from $\mathcal{N}\backslash \S$\vspace{0.2cm}
            \STATE $j_s = \argmax_{j\in \mathcal{R}} \frac{\u_j^\top\bar{\Sb}_\S^{2}\u_j}{\sigma_j^{2}+\u_j^\top\bar{\Sb}_\S\u_j}$\vspace{0.2cm}
            \STATE $\bar{\Sb}_{\S \cup\{j_s\}} = \bar{\Sb}_\S-\frac{\bar{\Sb}_{\S}\u_{j}\u_{j}^\top\bar{\Sb}_{\S}}{\sigma_j^2+\u_{j}^\top\bar{\Sb}_{\S}\u_{j}}$\vspace{0.2cm}
            \STATE Set $\S \leftarrow \S\cup \{j_s\}$\vspace{0.2cm}
	\ENDWHILE
	\RETURN $\S$.
\end{algorithmic}
\end{algorithm}
\subsection{Theoretical analysis}\label{sec:anal}
In this section, we analyze performance of the proposed randomized-greedy algorithm in a range of scenarios.

Theorem \ref{thm:exp} below states that if $f(\S)$ is characterized by a bounded maximum element-wise curvature, Algorithm \ref{alg:greedy} returns a sampling subset yielding an MSE that is on average within a multiplicative factor of the MSE associated with the optimal sampling set.
\begin{theorem}\label{thm:exp}
\textit{Let $\mathcal{C}_{f}$ denote the maximum element-wise curvature of $f(\S) = \mathrm{Tr}(\P-\bar{\Sb}_\S)$, 
the objective function in \eqref{eq:probsub}. Let $\alpha =(1-e^{-\frac{1}{c}}-\frac{\epsilon^\beta}{c})$, 
where $c=\max\{1,{\cal C}_{f}\}$, $e^{-m}\leq\epsilon<1$, and $\beta = 1+\max\{0,\frac{s}{2N}-\frac{1}{2(N-s)}\}$. 
Let $\S_{rg}$ be the sampling set returned by the randomized greedy algorithm and let $\O$ denote the optimal 
solution of \eqref{eq:probf}. Then}
\begin{equation}\label{eq:expbound}
\E\left[\mathrm{Tr}(\bar{\Sb}_{\\S_{rg}})\right]\leq \alpha \mathrm{Tr}(\bar{\Sb}_{\O}) + (1-\alpha) \mathrm{Tr}(\P).
\end{equation}
\end{theorem}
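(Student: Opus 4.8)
The plan is to translate the claimed bound on $\E[\mathrm{Tr}(\bar{\Sb}_{\S_{rg}})]$ into an equivalent multiplicative guarantee on the set function $f$. Since $f(\S)=\mathrm{Tr}(\P)-\mathrm{Tr}(\bar{\Sb}_\S)$ and $f(\emptyset)=0$ by Proposition \oldref{thm:p}, and since $\O$ minimizing $\mathrm{Tr}(\bar{\Sb}_\S)$ in \eqref{eq:probf} is exactly the maximizer of $f$ over $|\S|\le m$, the target inequality \eqref{eq:expbound} is algebraically equivalent to $\E[f(\S_{rg})]\ge\alpha\,f(\O)$. So it suffices to show that the randomized-greedy algorithm recovers at least an $\alpha$ fraction of the optimal objective in expectation, which I would establish by a per-iteration progress argument on $f$ together with the fact (Proposition \oldref{thm:p}) that $f$ is monotone with nonnegative marginal gains given by \eqref{eq:mg}.

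The central object is the expected marginal gain at a generic iteration. Fix the set $\S_i$ obtained after $i$ steps and let $g_i=|\O\setminus\S_i|\le m$. First I would use monotonicity together with the curvature bound from the definition of maximum element-wise curvature: telescoping $f(\O)-f(\S_i)\le f(\O\cup\S_i)-f(\S_i)=\sum_l f_{o_l}(\S_i\cup\{o_1,\dots,o_{l-1}\})$ over $\O\setminus\S_i=\{o_1,\dots,o_{g_i}\}$ and applying $f_i(\T)\le\mathcal C_f\,f_i(\S)$ term by term yields $\sum_{o\in\O\setminus\S_i}f_o(\S_i)\ge\frac1c\big(f(\O)-f(\S_i)\big)$ with $c=\max\{1,\mathcal C_f\}$. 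Next I would lower-bound the expected gain of the randomly drawn candidate set $\mathcal R$ by the average gain of the optimal elements it hits, using the standard stochastic-greedy estimate
\[
\E_{\mathcal R}\Big[\max_{j\in\mathcal R}f_j(\S_i)\Big]\ge\frac{1-\Pr[\mathcal R\cap(\O\setminus\S_i)=\emptyset]}{g_i}\sum_{o\in\O\setminus\S_i}f_o(\S_i)\ge\frac{1-\Pr[\mathcal R\cap(\O\setminus\S_i)=\emptyset]}{c\,g_i}\big(f(\O)-f(\S_i)\big),
\]
which produces a one-step recursion for the expected gap $\delta_i=f(\O)-\E[f(\S_i)]$.

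The hard part will be controlling the collision probability $\Pr[\mathcal R\cap(\O\setminus\S_i)=\emptyset]$ sharply enough to produce the exponent $\beta$. Because $\mathcal R$ is drawn without replacement, of size $s=\frac{N}{m}\log(1/\epsilon)$, from $\mathcal N\setminus\S_i$, I would write $\Pr[\mathcal R\cap(\O\setminus\S_i)=\emptyset]=\prod_{j=0}^{s-1}\frac{N-i-g_i-j}{N-i-j}\le e^{-g_i\sum_{j=0}^{s-1}1/(N-i-j)}$. The worst case is $i=0$ and $g_i=m$, since $\frac{1-e^{-g_iH}}{g_i}$ is decreasing in $g_i$ and the harmonic sum $H=\sum_{j=0}^{s-1}1/(N-i-j)$ is smallest when $N-i$ is largest; a careful lower bound on $\sum_{j=0}^{s-1}1/(N-j)$ — the step that yields $\beta=1+\max\{0,\frac{s}{2N}-\frac{1}{2(N-s)}\}$ rather than the crude $\beta=1$ coming from the coarser with-replacement estimate $(1-m/N)^s$ — gives $\Pr[\text{miss}]\le\epsilon^{\beta}$ and hence the uniform per-step factor $\frac{1-\epsilon^\beta}{mc}$. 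I expect this probabilistic estimate and the extraction of the precise $\beta$ to be the delicate part; the curvature step and the recursion are comparatively mechanical.

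Finally I would close the recursion. With $\delta_{i+1}\le\big(1-\frac{1-\epsilon^\beta}{mc}\big)\delta_i$ and $\delta_0=f(\O)$, unrolling the $m$ iterations gives
\[
\delta_m\le\Big(1-\frac{1-\epsilon^\beta}{mc}\Big)^m f(\O)\le e^{-(1-\epsilon^\beta)/c}f(\O)\le\Big(e^{-1/c}+\frac{\epsilon^\beta}{c}\Big)f(\O),
\]
where the last inequality is the elementary estimate $e^{-(1-\epsilon^\beta)/c}\le e^{-1/c}+\frac{\epsilon^\beta}{c}$, valid since $\epsilon^\beta<1$. This converts the exponential into the additive form and yields $\E[f(\S_{rg})]\ge\big(1-e^{-1/c}-\frac{\epsilon^\beta}{c}\big)f(\O)=\alpha\,f(\O)$. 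Undoing the reduction from the first paragraph then produces \eqref{eq:expbound}.
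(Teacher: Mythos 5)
Your proposal is correct and follows essentially the same route as the paper's proof: reduce \eqref{eq:expbound} to $\E[f(\S_{rg})]\ge\alpha f(\O)$, bound $\sum_{j\in\O\backslash\S}f_j(\S)$ from below via the curvature (the paper cites this as a lemma with constant $C(r)\le c$, which you re-derive by telescoping), invoke the stochastic-greedy expected-max estimate, sharpen the without-replacement miss probability through the harmonic sum $H_N-H_{N-s}$ to extract $\beta$, and close with the same recursion and the elementary bound $e^{-(1-\epsilon^\beta)/c}\le e^{-1/c}+\epsilon^\beta/c$. The only substantive step you leave as a sketch — the expansion of the harmonic sum yielding $\beta=1+\max\{0,\tfrac{s}{2N}-\tfrac{1}{2(N-s)}\}$ — is exactly the computation the paper carries out via the Euler--Mascheroni expansion, so there is no gap in the approach.
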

\begin{proof}
	The proof of Theorem \ref{thm:exp} relies on the argument that if $s = \frac{N}{m}\log\frac{1}{\epsilon}$, then with high probability the random set $\mathcal{R}$ in each iteration of Algorithm \ref{alg:greedy} contains at least one node from $\O$. See Appendix \ref{pf:exp} for the complete proof.
\end{proof}

Compared to the results of \cite{mirzasoleiman2014lazier} where the maximization of {\it strictly} submodular and 
monotone functions is considered, Theorem \ref{thm:exp} relaxes this assumption and states that 
submodularity is not required for near-optimal performance of the randomized greedy algorithm. In particular, if the set 
function is {\it weak submodular}, Algorithm \ref{alg:greedy} still selects a sampling set with an MSE near that 
achieved by the optimal sampling set. In addition, even if the function is submodular (e.g., when the objective is $\log\det(.)$ 
function instead of the MSE), the approximation factor in Theorem \ref{thm:exp} is tighter than that of 
\cite{mirzasoleiman2014lazier} as the result of the analysis presented in the proof of Theorem \ref{thm:exp}. 
Moreover, a major assumption in \cite{mirzasoleiman2014lazier} is that $\mathcal{R}$ is constructed by sampling 
with replacement. 
In contrast, we assume $\mathcal{R}$ is constructed by sampling without replacement and carry out the 
analysis in this setting.

Next, we study the performance of the randomized greedy algorithm using the tools of probably approximately 
correct (PAC) learning theory \cite{valiant1984theory,valiant2013probably}. That is, not only the sampling set 
selected by Algorithm \ref{alg:greedy} is on expectation near optimal, but the MSE associated with the selected 
sampling set is with high probability close to the smallest achievable MSE. The randomization of Algorithm 
\ref{alg:greedy} can be interpreted as an approximation of the marginal gains of the nodes selected by the 
greedy scheme \cite{chamon2017greedy,shamaiah2010greedy,shamaiah2012greedy}. More specifically, following this interpretation for the 
$i\ts{th}$ iteration we have $f_{j_{rg}}(\S_{rg}) := \eta_i f_{j_{g}}(\S_{g})$, where subscripts $rg$ and $g$ 
indicate the sampling sets and nodes selected by the randomized greedy (Algorithm \ref{alg:greedy}) and the
greedy algorithm in \cite{chamon2017greedy}, respectively, and 
$0<\eta_i\leq 1$ for all $i \in [m]$ are random variables. Following this argument and by employing the Bernstein 
inequality \cite{tropp2015introduction}, we arrive Theorem \ref{thm:pac} which states that the randomized greedy 
algorithm selects a near-optimal sampling set with high probability.   
\begin{theorem}\label{thm:pac}
\textit{Instate the notation and hypotheses of Theorem \ref{thm:exp}. Assume $\{\eta_i\}_{i=1}^m$ is a collection of random variables such that $\E[\eta_i]\geq \mu_{\epsilon}$, for all $i\in[m]$. Then, it holds that
\begin{equation}
\mathrm{Tr}(\bar{\Sb}_{\S_{rg}})\leq 	\left(1- e^{-\sum_{i=1}^m\frac{\eta_i}{mc}}\right) \mathrm{Tr}(\bar{\Sb}_{\O}) + e^{-\sum_{i=1}^m\frac{\eta_i}{mc}} \mathrm{Tr}(\P).
\end{equation}
Moreover, if $\{\eta_i\}_{i=1}^m$ are independent, for all $0<q<1$ with probability at least $1-e^{-Cm}$ it holds 
that}
\begin{equation}\label{eq:pacbound}
\mathrm{Tr}(\bar{\Sb}_{\S_{rg}})\leq \left(1-e^{-\frac{(1-q)\mu_{\epsilon}}{c}}\right) \mathrm{Tr}(\bar{\Sb}_{O}) + e^{-\frac{(1-q)\mu_{\epsilon}}{c}} \mathrm{Tr}(\P)
\end{equation}
for some $C>0$.
\end{theorem}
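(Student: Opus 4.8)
The plan is to recast both displayed inequalities as statements about the set function $f(\S)=\mathrm{Tr}(\P-\bar{\Sb}_\S)$ of Proposition~\oldref{thm:p} and then run a greedy-style analysis. Since $\mathrm{Tr}(\bar{\Sb}_\S)=\mathrm{Tr}(\P)-f(\S)$ and $f(\O)=\mathrm{Tr}(\P)-\mathrm{Tr}(\bar{\Sb}_\O)$ is the optimum of \eqref{eq:probsub}, a one-line rearrangement shows that the first claimed bound is equivalent to the multiplicative guarantee $f(\S_{rg})\geq\bigl(1-e^{-\frac{1}{mc}\sum_{i=1}^m\eta_i}\bigr)f(\O)$, and that \eqref{eq:pacbound} follows from the same guarantee once $\frac{1}{m}\sum_i\eta_i$ is lower bounded by $(1-q)\mu_\epsilon$. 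Thus the theorem reduces to (i) a conditional greedy-type progress bound, and (ii) a concentration argument for $\sum_i\eta_i$.

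For part (i), first I would establish a per-iteration progress lemma. Writing $\S^{(i)}$ for the set held by Algorithm~\ref{alg:greedy} after $i$ steps and $\delta_i=f(\O)-f(\S^{(i)})$, monotonicity of $f$ gives $f(\O)-f(\S^{(i)})\le f(\O\cup\S^{(i)})-f(\S^{(i)})$. I then telescope over the elements of $\O\setminus\S^{(i)}$ and use the maximum element-wise curvature to pull each intermediate marginal gain back to the base set $\S^{(i)}$: for $\S^{(i)}\subseteq\T$ the curvature bound yields $f_o(\T)\le c\,f_o(\S^{(i)})$ with $c=\max\{1,\mathcal{C}_f\}$ (the constant $1$ covering the submodular regime and the first telescope term). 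This gives $\delta_i\le c\,m\max_{j}f_j(\S^{(i)})$, hence $\max_j f_j(\S^{(i)})\ge\frac{1}{cm}\delta_i$. Since the node chosen by the randomized greedy realizes a marginal gain $\eta_{i+1}\max_j f_j(\S^{(i)})$ by the definition of $\eta_{i+1}$, the recursion $\delta_{i+1}\le(1-\frac{\eta_{i+1}}{cm})\delta_i$ follows; iterating from $\delta_0=f(\O)$ (using $f(\emptyset)=0$) and applying $1-x\le e^{-x}$ delivers the conditional bound.

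For part (ii), since the $\eta_i$ are independent, bounded in $(0,1]$, and satisfy $\E[\eta_i]\ge\mu_\epsilon$, I would apply the Bernstein inequality to $\sum_i(\eta_i-\E[\eta_i])$. Using $|\eta_i-\E[\eta_i]|\le 1$ and $\Var(\eta_i)\le\E[\eta_i]\le 1$, Bernstein gives $\Pr[\frac1m\sum_i\eta_i<(1-q)\mu_\epsilon]\le e^{-Cm}$ for an explicit $C=C(q,\mu_\epsilon)>0$. On the complementary event $\frac1m\sum_i\eta_i\ge(1-q)\mu_\epsilon$, so $e^{-\frac{1}{mc}\sum_i\eta_i}\le e^{-\frac{(1-q)\mu_\epsilon}{c}}$; substituting this into the conditional bound and translating back via $f(\S)=\mathrm{Tr}(\P)-\mathrm{Tr}(\bar{\Sb}_\S)$ yields \eqref{eq:pacbound} with probability at least $1-e^{-Cm}$.

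The main obstacle I anticipate is the curvature bookkeeping in the telescoping step of part (i): one must apply the maximum element-wise curvature in the correct direction (as the set grows) and verify that $c=\max\{1,\mathcal{C}_f\}$ handles both the weakly submodular case and the base term uniformly, so that the factor $cm$ — and thus the exponent $\tfrac{1}{mc}\sum_i\eta_i$ — comes out exactly. The concentration step is comparatively routine, the only care being to check the Bernstein hypotheses on the range and variance of the $\eta_i$ so as to exhibit a strictly positive $C$.
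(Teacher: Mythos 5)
Your proposal is correct and follows essentially the same route as the paper's proof: reduce both bounds to statements about $f(\S)=\mathrm{Tr}(\P-\bar{\Sb}_\S)$, derive the per-iteration recursion $\delta_{i+1}\le\left(1-\tfrac{\eta_{i+1}}{mc}\right)\delta_i$ from the curvature/telescoping bound $f(\O\cup\S)-f(\S)\le c\sum_{j\in\O\setminus\S}f_j(\S)$ together with the greedy selection rule and the definition of $\eta_{i+1}$, convert the resulting product to an exponential via $1-x\le e^{-x}$, and invoke Bernstein's inequality on $\sum_{i}\eta_i$ for the high-probability version. The only cosmetic differences are that you re-derive the telescoping curvature bound inline where the paper cites Lemma~\oldref{lem:curv}, and you control $\Var(\eta_i)$ by $\E[\eta_i]\le 1$ rather than by Popoviciu's inequality; neither changes the argument.
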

\begin{proof}
	See Appendix \ref{pf:pac}.
\end{proof}
In our simulation studies (see Section \ref{sec:sim}), we empirically verify the results of Theorems \ref{thm:exp} 
and \ref{thm:pac} and illustrate that Algorithm \ref{alg:greedy} performs favorably compared to the competing 
greedy scheme both on average and for each individual sampling task.

Finally, in Theorem \ref{thm:curv} we extend the results of \cite{chamon2017greedy} derived for stationary 
graph signals and show that the maximum element-wise curvature of $f(\S) = \mathrm{Tr}(\P-\bar{\Sb}_\S)$ 
is bounded even for non-stationary graph signals and in the scenario where the statistics of the noise 
varies across nodes of the graph.
\begin{theorem}\label{thm:curv}
\textit{Let $\mathcal{C}_{f}$ be the maximum element-wise curvature of $f(\S) = \mathrm{Tr}(\P-\bar{\Sb}_\S)$. 
Then it holds that}
\begin{equation}\label{eq:curvbound}
\mathcal{C}_{\max} \leq\max_{j \in \mathcal{N}} \frac{\lambda_{\max}^2(\P)}{\lambda_{\min}^2(\P)}\left(1+\frac{\lambda_{\max}(\P)}{\sigma_j^2}\right)^3.
\end{equation}
\end{theorem}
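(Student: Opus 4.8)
The plan is to work directly from the closed-form marginal gain $f_i(\S) = \u_i^\top\bar{\Sb}_{\S}^2\u_i/(\sigma_i^2 + \u_i^\top\bar{\Sb}_{\S}\u_i)$ supplied by \eqref{eq:mg}, so that for any admissible triple with $\S\subset\T$ and $i\in\mathcal{N}\backslash\T$ the curvature ratio factors as $\frac{f_i(\T)}{f_i(\S)} = \frac{\u_i^\top\bar{\Sb}_{\T}^2\u_i}{\u_i^\top\bar{\Sb}_{\S}^2\u_i}\cdot\frac{\sigma_i^2 + \u_i^\top\bar{\Sb}_{\S}\u_i}{\sigma_i^2 + \u_i^\top\bar{\Sb}_{\T}\u_i}$. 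First I would record the order relations that the recursion \eqref{eq:upf} forces: since each update subtracts a positive semidefinite rank-one term, $\bar{\Sb}_{\T}\preceq\bar{\Sb}_{\S}$ whenever $\S\subseteq\T$, and since $\bar{\Sb}_{\S}^{-1} = \P^{-1} + \sum_{j\in\S}\sigma_j^{-2}\u_j\u_j^\top\succeq\P^{-1}$ by \eqref{eq:covf}, we also get $\bar{\Sb}_{\S}\preceq\P$, hence $\lambda_{\max}(\bar{\Sb}_{\S})\le\lambda_{\max}(\P)$ for every $\S$. I would also use that the columns of $\U = \V_{\K,c}$ are orthonormal (being columns of the unitary $\V$), so each row obeys $\|\u_i\|_2\le 1$ and $\sum_{j\in\S}\u_j\u_j^\top\preceq\U^\top\U = \I$.

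With these in hand the second factor is the easy one: its numerator is at most $\sigma_i^2 + \lambda_{\max}(\P)\|\u_i\|_2^2\le\sigma_i^2 + \lambda_{\max}(\P)$ and its denominator is at least $\sigma_i^2$, which yields the single power $1 + \lambda_{\max}(\P)/\sigma_i^2$. For the first factor I would bound the numerator by $\lambda_{\max}^2(\bar{\Sb}_{\T})\|\u_i\|_2^2\le\lambda_{\max}^2(\P)\|\u_i\|_2^2$ and the denominator from below by $\lambda_{\min}^2(\bar{\Sb}_{\S})\|\u_i\|_2^2$; crucially the common factor $\|\u_i\|_2^2$ cancels, leaving $\lambda_{\max}^2(\P)/\lambda_{\min}^2(\bar{\Sb}_{\S})$ and removing any dependence on how small an individual row norm might be.

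The hard part is producing a lower bound on $\lambda_{\min}(\bar{\Sb}_{\S})$ that is \emph{uniform} over all sampling sets, since a priori the error covariance degenerates as more nodes are added. This is exactly where the orthonormality of $\U$ pays off: from $\bar{\Sb}_{\S}^{-1} = \P^{-1} + \sum_{j\in\S}\sigma_j^{-2}\u_j\u_j^\top\preceq\P^{-1} + \sigma_{\min}^{-2}\I$, with $\sigma_{\min}^2 = \min_{j}\sigma_j^2$ and using $\sum_{j\in\S}\u_j\u_j^\top\preceq\I$, Weyl's inequality gives $\lambda_{\min}(\bar{\Sb}_{\S})\ge(\lambda_{\min}(\P)^{-1}+\sigma_{\min}^{-2})^{-1} = \lambda_{\min}(\P)\sigma_{\min}^2/(\sigma_{\min}^2+\lambda_{\min}(\P))$, a constant independent of $\S$. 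Substituting this into the first factor yields $\frac{\lambda_{\max}^2(\P)}{\lambda_{\min}^2(\P)}\big(1 + \lambda_{\min}(\P)/\sigma_{\min}^2\big)^2$; multiplying by the second factor and maximizing over $i\in\mathcal{N}$ — the worst case occurring at the node with smallest noise power — I would upper-bound $\lambda_{\min}(\P)$ by $\lambda_{\max}(\P)$ and collapse the three $\sigma$-terms into $(1 + \lambda_{\max}(\P)/\sigma_{\min}^2)^3$, recovering exactly $\max_{j\in\mathcal{N}}\frac{\lambda_{\max}^2(\P)}{\lambda_{\min}^2(\P)}\big(1 + \lambda_{\max}(\P)/\sigma_j^2\big)^3$. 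I expect the only real subtlety to be justifying this uniform lower bound, and beyond that the argument is a careful but routine chain of Loewner-order and eigenvalue inequalities.
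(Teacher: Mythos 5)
Your proposal is correct and follows essentially the same route as the paper's proof: both factor the marginal-gain ratio from \eqref{eq:mg}, bound each factor by Rayleigh-quotient/eigenvalue inequalities together with $\|\u_j\|_2 \le 1$, and obtain a set-independent spectral bound on $\bar{\Sb}_\S$ by sandwiching $\bar{\Sb}_\S^{-1}$ between $\P^{-1}$ and $\P^{-1}+\sigma^{-2}\I$ (the paper does this via Weyl's inequality applied to the chain $\P^{-1} \preceq \bar{\Sb}_\S^{-1} \preceq \bar{\Sb}_\T^{-1} \preceq \bar{\Sb}_{\mathcal{N}}^{-1}$). The only notable difference is cosmetic but in your favor: your uniform lower bound on $\lambda_{\min}(\bar{\Sb}_\S)$ via $\sum_{j}\u_j\u_j^\top \preceq \I$ and $\sigma_{\min}^2$ gives a slightly cleaner justification of the step the paper writes as $\lambda_{\max}(\bar{\Sb}_{\mathcal{N}}^{-1}) \le \lambda_{\max}(\P^{-1}+\sigma_j^{-2}\I_N)$, which as stated only holds transparently for the node with smallest noise power --- the one that realizes the final maximum anyway.
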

\begin{proof}
	See Appendix \ref{pf:curv}.
\end{proof}
It was shown in \cite{chamon2017greedy} that if $\x$ is stationary and $\P = \sigma_\x^2\I_N$ for some 
$\sigma_\x^2>0$ and $\sigma^2_j = \sigma^2$ for all $j\in\mathcal{N}$, then the curvature of the MSE 
objective is bounded. However, Theorem \ref{thm:curv} holds even in the scenarios where the signal is 
non-stationary and the noise is not white.
\section{Numerical Examples}\label{sec:sim}
To assess the proposed support recovery and sampling algorithms, we study their performance in recovery of signals supported on synthetic and real-world graphs. In the first two subsections, we benchmark the performance of Algorithm 1, while in the rest of the subsections, we focus on evaluating the efficacy of the proposed randomized greedy algorithm.
\subsection{Synthetic Erd\H{o}s-R\'enyi random graphs I} \label{sec:erdo1} 
 \begin{figure*}[t]
	\begin{minipage}[t]{0.49\textwidth}
		\centering
		\includegraphics[width=1\linewidth]{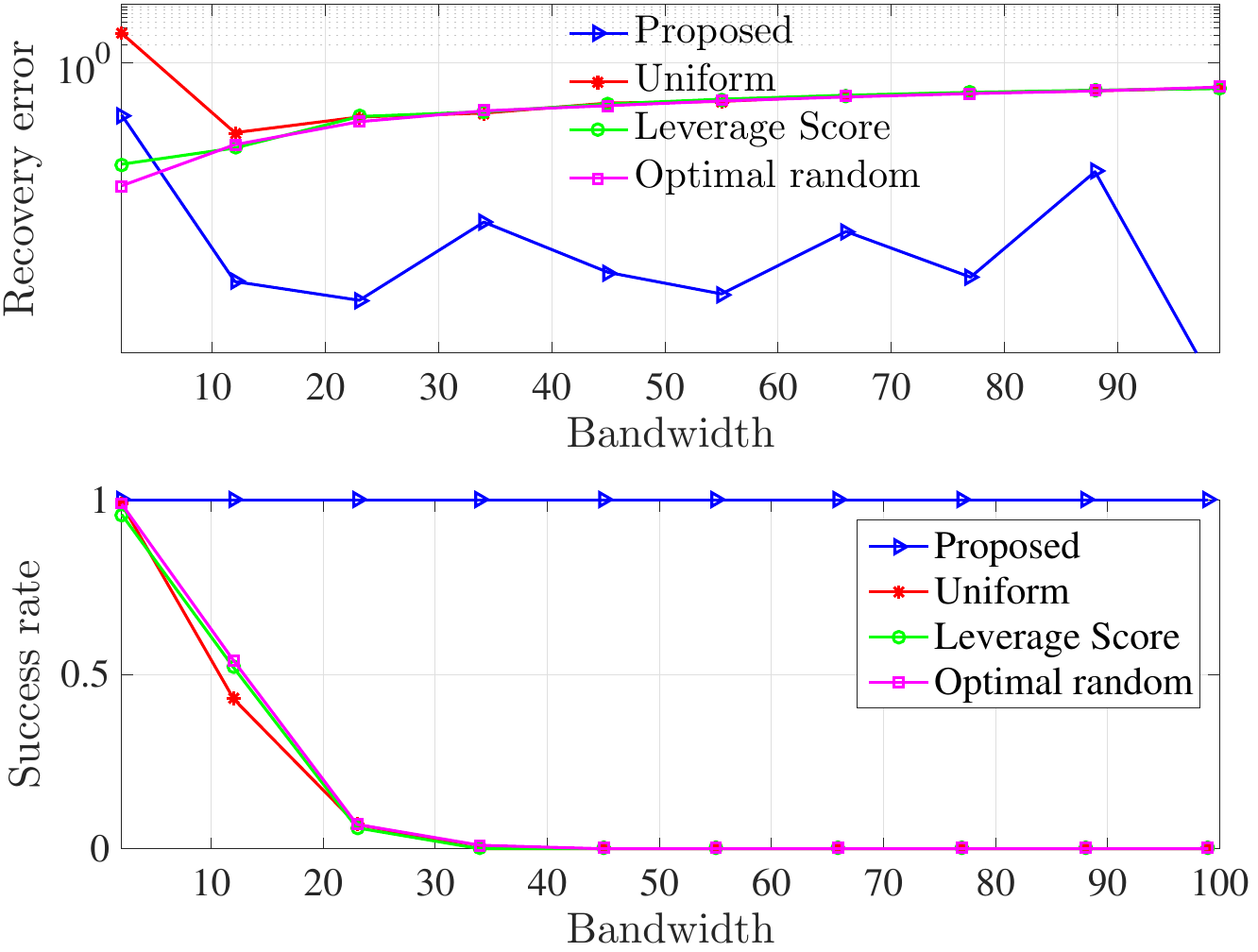}
		\centerline{\footnotesize(a)}\medskip
	\end{minipage}
	\hfill
	\begin{minipage}[t]{0.49\textwidth} 
		\centering
		\includegraphics[width=1\linewidth]{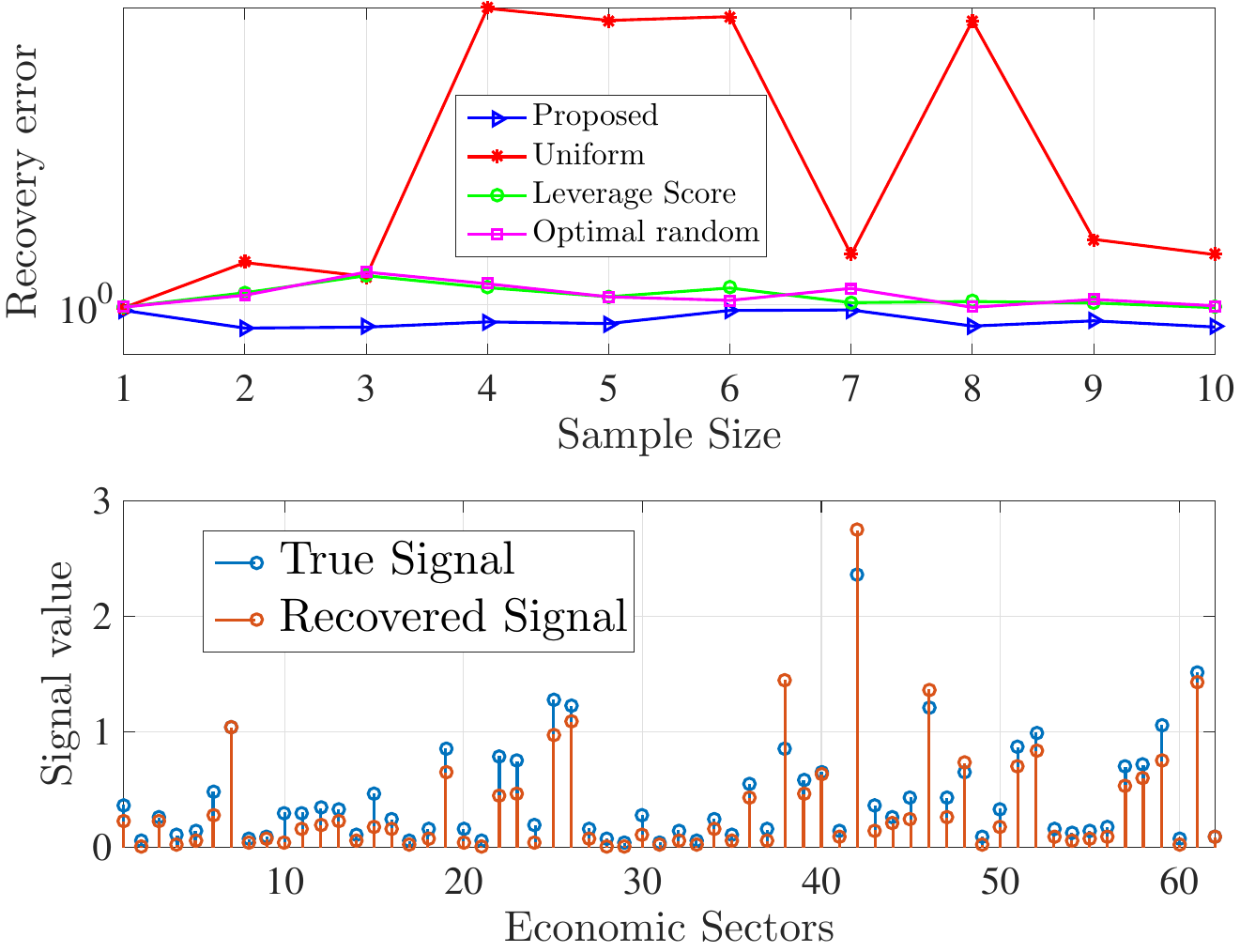}
		\centerline{\footnotesize (b)}\medskip
	\end{minipage}
	\caption{(a) Recovery error (top) and success rate 
		(bottom) of Algorithm 1 and various random selection sampling schemes versus bandwidth ($k$) 
		for undirected Erd\H{o}s-R\'enyi random graphs. 
		(b) Top: Recovery error comparison of different 
		selection sampling schemes as a function of the sample size for the economy network. Bottom: 
		Recovered and true graph signals for various economic sectors using Algorithm 1.}
	\label{fres1}
\end{figure*}
We first consider the task of sampling and reconstruction of noise-corrupted bandlimited graph signals with known support. Specifically, we consider undirected Erd\H{o}s-R\'enyi random graphs $\mathcal{G}$ of size $N = 100$ and edge probability $0.2$. We generate $\mathbf{x} = \mathbf{U}\bar{\mathbf{x}}_{K}$ by forming $\mathbf{U}$ using the first $k$ eigenvectors of the graph adjacency matrix, where $k$ is varied linearly from $2$ to $99$. The non-zero frequency components $\bar{\mathbf{x}}_{K}$ are drawn independently from a zero-mean Gaussian distribution with standard deviation $100$. The signal is corrupted by a Gaussian noise term with $\mathbf{Q} = 0.02^2\mathbf{I}_N$. We compare the recovery performance of the proposed scheme in Algorithm~\ref{alg:1} with state-of-the-art uniform, leverage score, and optimal random sampling schemes \cite{chen2015discrete,chen2016signal,puy2016random}. We define the recovery error as the ratio of the error energy to the true signal's energy.  Furthermore, the {\it success rate} \cite{chen2015discrete} is defined as the fraction of instances where $\C\U$ is invertible [cf. \eqref{eq:rec1}]. The results, averaged over 100 independent instances, are shown in Fig \ref{fres1}(a). As we can see from Fig \ref{fres1}(a) (top), 
the proposed scheme consistently achieves lower recovery error than competing schemes. Moreover, as shown in Fig \ref{fres1}(a) (bottom), when the bandwidth increases the success rate of random sampling schemes decreases while the success rate of the proposed scheme is always one, as formally established in Theorem \ref{thm:iss}.

Next, we compare the proposed sampling algorithm with Algorithm 1 of \cite{chen2015discrete} (see Fig~\ref{fig:new}) for  undirected
	Erd\H{o}s-R\'enyi  random graphs where we consider smaller bandwidth here to accommodate the computational cost of Algorithm 1 of \cite{chen2015discrete}. A that disadvantage of Algorithm 1 of \cite{chen2015discrete} compared to our method is that the iterative method of \cite{chen2015discrete} needs to perform singular value decomposition in each iteration to identify the sampling operator (see step 2 of Algorithm 1 in \cite{chen2015discrete}). Additionally, similar to our scheme which requires a residual node for initialization, \cite{chen2015discrete}  also needs an initial node. However, the selection of such an initial node is unclear in Algorithm 1 of \cite{chen2015discrete}. One major benefit of our method is that, as we show in Theorem 1, the proposed scheme achieves perfect recovery while  Algorithm 1 of \cite{chen2015discrete} does not have this important property. In terms of the empirical comparison, as Fig~\ref{fig:new} shows, the proposed iterative algorithm achieves a lower reconstruction error while consistently achieving success rate of one.
	
\begin{figure*}[t]
	\begin{minipage}[t]{0.49\textwidth}
		\centering
		\includegraphics[width=1\linewidth]{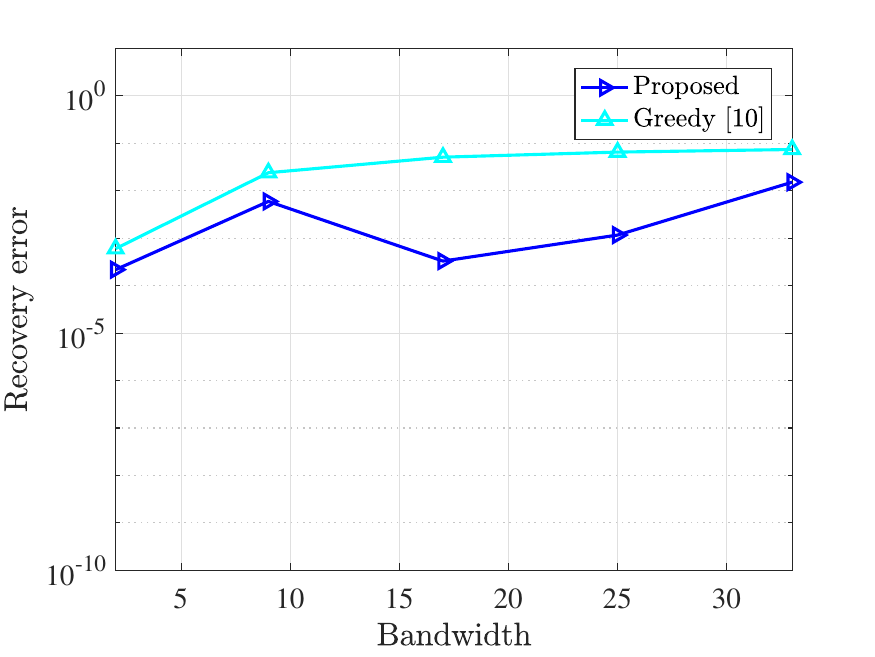}
		\centerline{\footnotesize(a)}\medskip
	\end{minipage}
	\hfill
	\begin{minipage}[t]{0.49\textwidth} 
		\centering
		\includegraphics[width=1\linewidth]{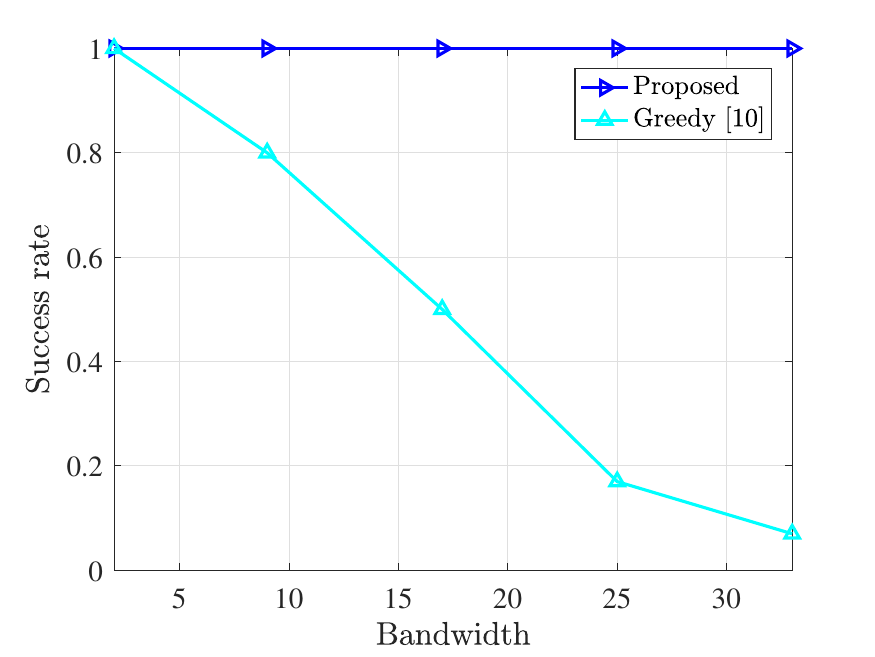}
		\centerline{\footnotesize (b)}\medskip
	\end{minipage}
	\caption{(a) Recovery error and (b) success rate 
		(bottom) of Algorithm 1 and the Greedy method of  \cite{chen2015discrete} versus bandwidth ($k$) 
		for undirected Erd\H{o}s-R\'enyi random graphs.}
	\label{fig:new}
\end{figure*}	
\subsection{Real graph: interpolation of industrial sectors' production} 
Next, we analyze data from the Bureau of Economic Analysis of the U.S. Department of 
Commerce which publicizes an annual table of input and outputs organized by economic sectors 
\footnote{Dataset from https://www.bea.gov.}. Specifically, we represent by nodes $62$ industrial 
sectors as defined by the North American Industry Classification System, and construct weighted 
edges and the graph signal similar to \cite{marques2016sampling}. The (undirected) edge weight 
between two nodes represents the average total production of the sectors, the first sector being 
used as the input to the other sector, expressed in trillions of dollars per year. This edge weight is 
averaged over the years $2008$, $2009$, and $2010$. Also, two artificial nodes are connected to 
all $62$ nodes as the added value generated and the level of production destined to the market of 
final users. Thus, the final graph has $N = 64$ nodes. The weights lower than $0.01$ are 
thresholded to zero and the eigenvalue decomposition of the corresponding adjacency matrix 
$\mathbf{A} = \mathbf{V} \mathbf{\Lambda} \mathbf{V}^{\top}$ is performed. A graph signal 
$\mathbf{x} \in \mathbb{R}^{64}$ can be regarded as a unidimensional total production -- in trillion 
of dollars -- of each sector during the year 2011. Signal $\mathbf{x}$ is shown to be approximately 
(low-pass) bandlimited in \cite[Fig. \ref{fres2}(a)(top)]{marques2016sampling} with a bandwidth of $4$.

We interpolate sectors' production by observing a few nodes using Algorithm~\ref{alg:1} and 
assuming that the signal is low-pass (i.e., with smooth variations over the built network). Then, 
we vary the sample size and compare the recovery performance of the proposed scheme with 
state-of-the-art uniform, leverage score, and optimal random sampling schemes 
\cite{chen2015discrete,chen2016signal,puy2016random} averaged over $1000$ Monte-Carlo 
simulations as shown in Fig. \ref{fres1}(b) (top). As the figure indicates, the proposed algorithm outperforms 
uniform, leverage score, and optimal random sampling schemes 
\cite{chen2015discrete,chen2016signal,puy2016random}. However, Algorithm~\ref{alg:1} does
not achieve perfect recovery in this noiseless scenario because the signal is not truly bandlimited.
Moreover, Fig. \ref{fres1}(b) (bottom) shows a realization of the graph signal $\mathbf{x}$ superimposed 
with the reconstructed signal obtained using Algorithm~\ref{alg:1} with $k=2$ for all nodes excluding 
two artificial ones. The recovery error of the reconstructed signal is approximately $1.32\%$; as 
Fig. \ref{fres1}(b) (bottom) illustrates, $\hat{\mathbf{x}}$ closely approximates $\mathbf{x}$.
\begin{figure*}[t]
	\centering
	\minipage[t]{1\linewidth}
	\begin{subfigure}[t]{.32\linewidth}
		\includegraphics[width=\textwidth]{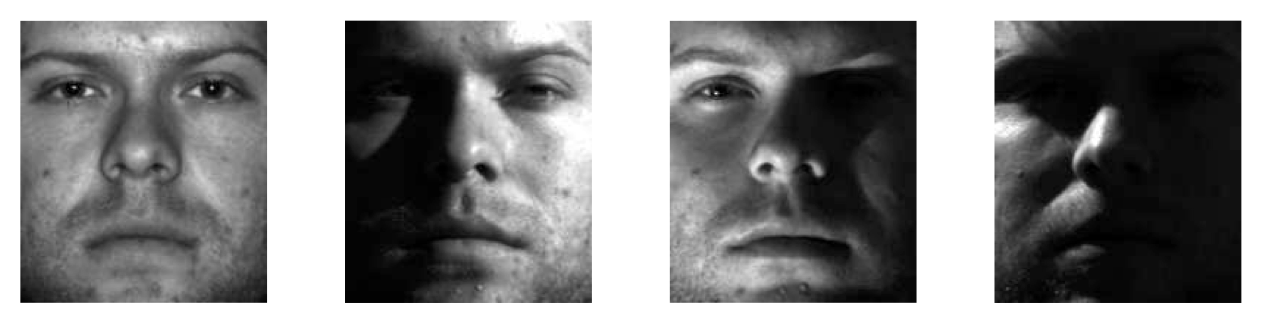}
	\end{subfigure}
	\begin{subfigure}[t]{.32\linewidth}
		\includegraphics[width=\linewidth]{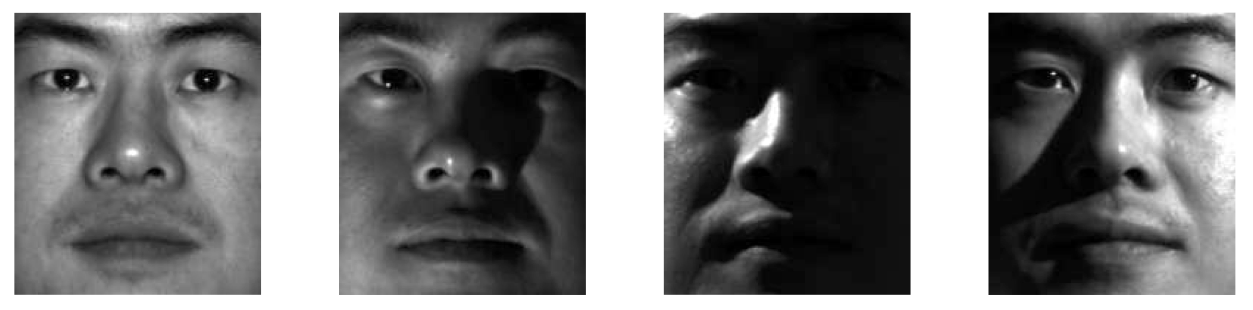}
	\end{subfigure}
	\begin{subfigure}[t]{.32\linewidth}
		\includegraphics[width=\linewidth]{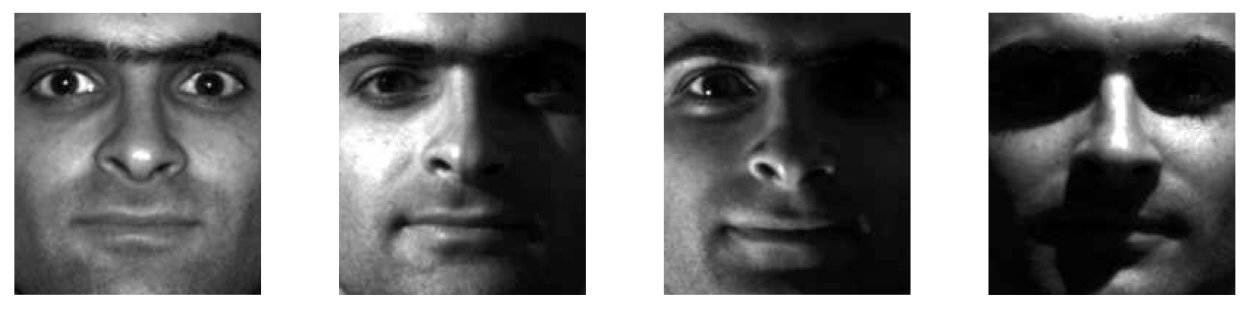}
	\end{subfigure}
	\caption{Face clustering: given images of multiple subjects, the goal is to find images that belong to the same subject 
	(examples from the EYaleB dataset \cite{georghiades2001few}).}
	\label{motionseg}
	\endminipage 
\end{figure*} 
\subsection{Synthetic graph: Localization of UAVs}
We now tackle a UAV localization problem in which the goal is to estimate absolute positions of 
robots from on-board sensor measurements. Specifically, consider a network of $N$ UAVs moving 
in a 2D plane and assume that each UAV is equipped with two systems: a laser scanner that 
measures the relative position of other UAVs within a sensing radius, and a GPS system that finds 
the absolute position of the UAV. \footnote{Notice that the graph structure in this application is essentially the time-varying communication network between the UAVs. {\color{black}In our simulation studies, we consider the localization task for only a single time-step. Nonetheless, the proposed sampling scheme can be employed in every time step where identification UAVs with GPS turned-on is required.}} While the laser system can find relative positions of the nearby 
UAVs with minimal power consumption, the GPS system requires intensive power to receive the 
location of the UAV from the control unit located potentially far from the network of UAVs. We 
consider the scenario where such inherent energy constraints prevent some UAVs 
to collect GPS data, i.e., only a subset of the UAVs can use the GPS. The objective is to compute 
the most representative subset of the UAVs so to minimize the MSE of the estimated global 
positions of all UAVs. To this end, we employ the proposed randomized-greedy scheme in 
Algorithm 2 with various values of $\epsilon$ to find a sampling set (a subset of UAVs) and 
compare its recovery error to that of the greedy sampling scheme \cite{chamon2017greedy}. 
Note that two graph 
signals, namely the x and y coordinates of UAVs, are supported on the network. Further, since 
UAVs that are close to each other have similar locations, both of these graph signals are smooth 
and hence bandlimited. {\color{black}It is worth to note that the collection of UAVs, typically referred to as UAV swarm, has a {\it swarm leader} that is task with handling costlier computations and is capable of communicating with the control unit that guides the swarm in moving in the environment.
}

We run Monte Carlo simulations with 1000 instances where we consider 1000 UAVs distributed 
uniformly on a $10\times 10$ grid; the range of the laser system is set to 0.3 and the power of noise 
affecting laser measurements is set to $10^{-2}$. The recovery error and running time results as a 
function of signals' bandwidth -- which is also the size of the sampling set -- are shown in Fig. \ref{fres2}(a) 
and Fig. \ref{fres2}(b), respectively. As we see in Fig. \ref{fres2}(a), performance of the proposed 
scheme and the greedy 
algorithm are fairly similar; as bandwidth increases, the recovery error decreases. Furthermore, 
as $\epsilon$ gets smaller, the gap between the performance of the proposed scheme and the 
greedy algorithm reduces until becoming negligible. The running time comparison illustrated in 
Fig. \ref{fres2}(b) reveals that for the largest sampling set considered (i.e. $k = 50$), the proposed scheme 
is more than 2x faster than the greedy method. Additionally, the complexity of the proposed scheme 
is linear in $k$, while that of the greedy method is quadratic, as predicted by our theoretical results; see also \cite{hashemi2018sampling} for additional MSE performance and runtime comparisons with the greedy sampling algorithm in \cite{chamon2017greedy}.
\subsection{Real graph: Semi-supervised face clustering} 
Clustering faces of individuals is an important task in computer vision 
\cite{georghiades2001few,you2015sparse,hashemi2017accelerated,hashemi2018evolutionary}. In real-world settings, 
labeling all images is practically infeasible. However, acquiring labels even for a small subset 
of data that can represent all images may drastically improve the clustering accuracy. The proposed randomized-greedy selection sampling framework can be employed in this setting to acquire labels for a small number of images to achieve improved clustering accuracy.  To this end, we test the randomized-greedy 
algorithm on EYaleB dataset \cite{georghiades2001few} (see Fig. \ref{motionseg}) 
which contains frontal face images of 38 individuals under 64 different illumination conditions. 
Similar to the prior works (see, e.g., \cite{you2015sparse,hashemi2017accelerated,hashemi2018evolutionary}), in our
studies the images are down-sampled to $48 \times 42$ from the original size of 
$192 \times 168$. In each of $100$ independent instances of the Monte Carlo simulation we 
randomly pick $8$ subjects and all of their images as the data points to be clustered; this 
results in a clustering problem with $N = 512$ data points. To construct the underlying graph signal
and capture similarity of the data points, we employ the sparse subspace clustering (SSC) 
scheme recently proposed in \cite{you2015sparse} to find the adjacency matrix $\A$ and the Laplacian matrix $\mathbf{L}$. The 
graph signal support on the constructed similarity graph is discrete valued, i.e., the value of each node is an integer 
in $\{1,\dots,8\}$. Note that the graph signal supported  on the constructed similarity  graph is smooth and bandlimited 
as similar images are unlikely to correspond to different individuals. The performance 
comparison of Algorithm 2 with various values for $\epsilon$, greedy sampling method, 
random sampling schemes, and the unsupervised clustering method are illustrated in Fig. \ref{fres2}(c) 
as a function of the sampling ratio ($k\slash N$). For the sake of clarity of presentation, we only 
show the result of the best method among uniform, leverage score, and optimal random sampling 
approaches \cite{chen2015discrete,puy2016random}. As we see in Fig. \ref{fres2}(c), the greedy and 
randomized-greedy schemes deliver the best clustering performance; as we increase size of 
the sampling set, the accuracy of semi-supervised schemes improves and the gap between the 
performance of random sampling methods and the proposed scheme decreases. Furthermore, 
our simulation studies reveal that acquiring labels of only 8 data points using the proposed 
scheme results in more than $12\%$ improvements in clustering accuracy as compared to the 
unsupervised method.
\begin{figure*}[t]
	\begin{minipage}[t]{0.32\textwidth}
		\centering
		\includegraphics[width=1\linewidth]{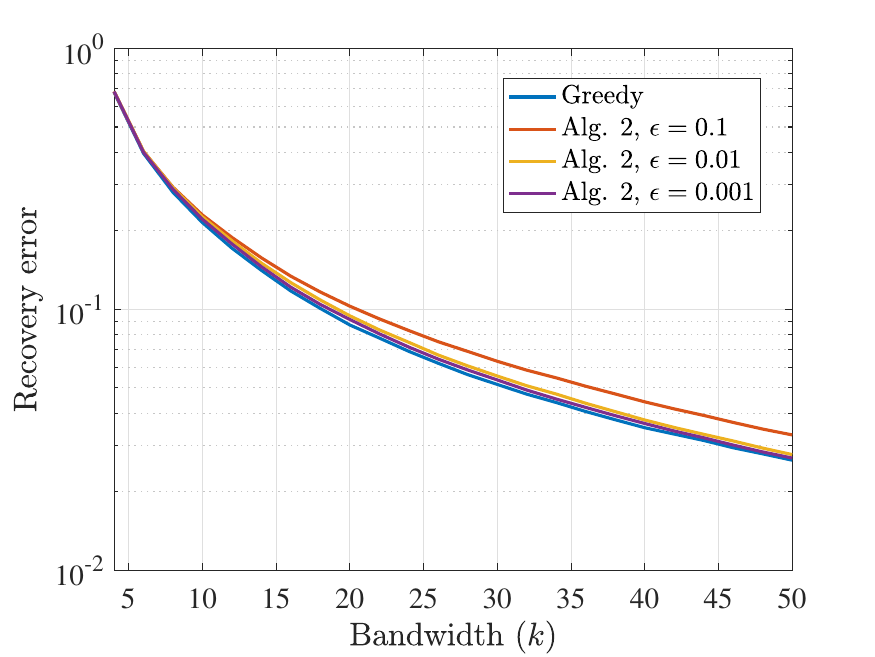}
		\centerline{\footnotesize(a)}\medskip
	\end{minipage}
	\hfill
	\begin{minipage}[t]{0.32\textwidth} 
		\centering
		\includegraphics[width=1\linewidth]{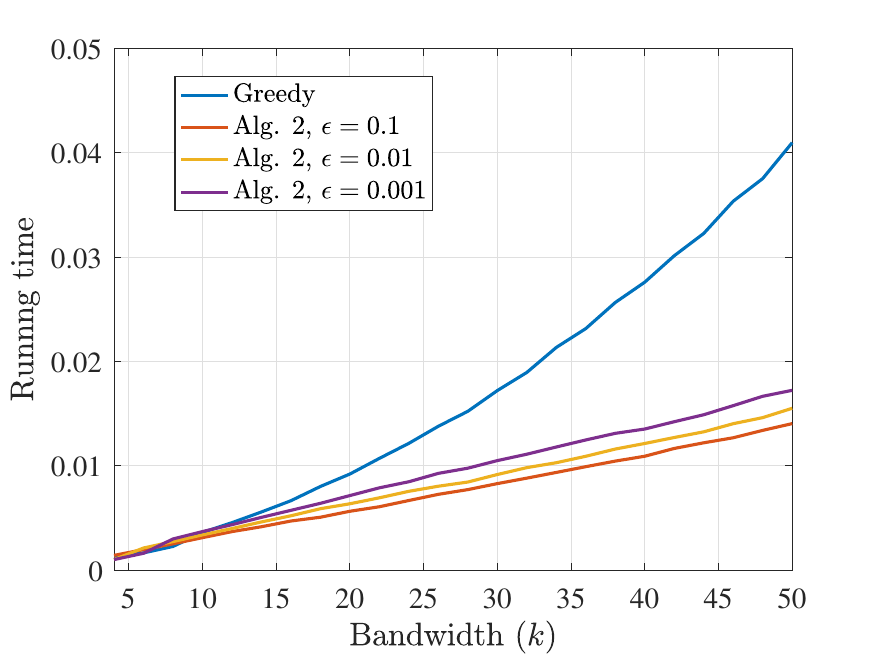}
		\centerline{\footnotesize(b)}\medskip
	\end{minipage}
	\hfill
	\begin{minipage}[t]{0.32\textwidth} 
		\centering
		\includegraphics[width=1\linewidth]{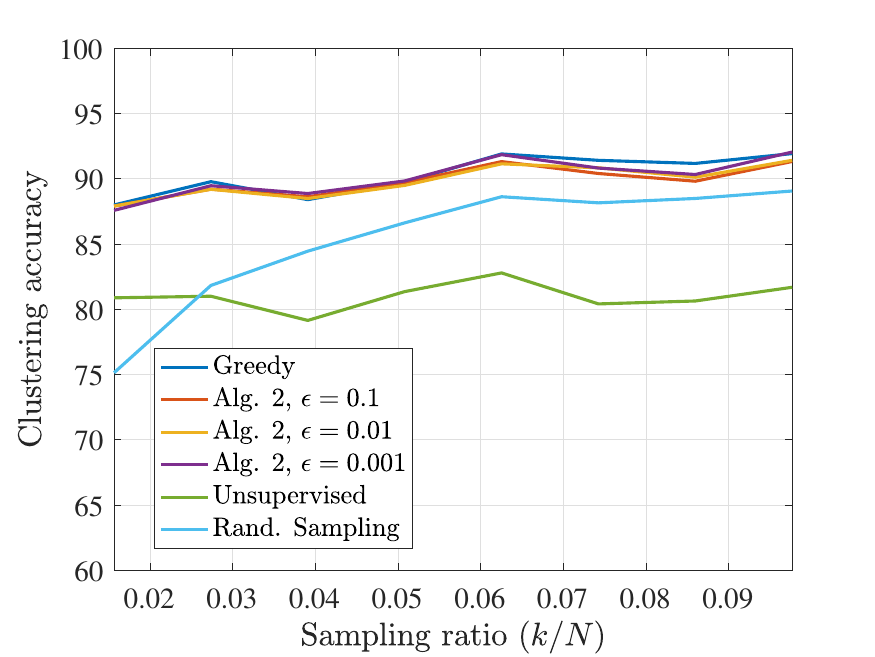}
		\centerline{\footnotesize (c)}\medskip
	\end{minipage}
	\caption{(a) Recovery error comparison of the greedy scheme \cite{chamon2017greedy} and Algorithm 2 as a function 
		of bandwidth for the UAV localization problem. (b) Running time comparison of the greedy 
		scheme \cite{chamon2017greedy} and Algorithm 2 as a function of bandwidth for the UAV localization problem. 
		(c) Clustering accuracy of greedy \cite{chamon2017greedy}, Algorithm 2, random sampling, and unsupervised methods as a function of the sampling ratio for the face 
		clustering application.}
	\label{fres2}
\end{figure*}
\subsection{Synthetic Erd\H{o}s-R\'enyi random graphs II}  
Since Algorithm 2 is a randomized scheme, in this section we study the performance of Algorithm 2 for each individual sampling tasks (i.e. each Monte-Carlo realizations). To this end, we again consider the Erd\H{o}s-R\'enyi random graphs, similar to those in Section \ref{sec:erdo1}. Here, we study the setting where $N=10$ and $k=4$. 
Bandlimited graph signals are generated as before except that this time we take $\mathbf{U}$ as the first $4$ eigenvectors of the adjacency matrix. Figs.~\ref{fig:min} (a) depicts superimposed MSE histograms of Algorithm 2 and the greedy sampling scheme \cite{chamon2017greedy} for 100 realizations per method and fixed $|S|=4$. As the figure illustrates, the proposed randomized greedy schemes performs well and is comparable with the greedy approach, not just on average but also for majority of individual sampling tasks.
\subsection{Real graph: Minnesota road network}
Next, we consider the Minnesota road network\footnote{https://sparse.tamu.edu/Gleich/minnesota} with $N=2642$ nodes in order to showcase scalability of the proposed graph sampling method. To that end, Bandlimited graph signals are generated by taking the first $k=600$ eigenvectors of the graph Laplacian matrix, where the non-zero frequency components are drawn from a zero-mean, multivariate Gaussian distribution with randomly chosen PSD covariance matrix $\P$. The signals are corrupted with additive white Gaussian noise with $\sigma^{2}=10^{-2} \mathbf{I}_{N}$. 
As expected, Figs.~\ref{fig:min} (b) and (c) depict trends of decreasing MSE and increasing running time versus $|S|$, respectively. The results are averaged over $1000$ Monte-Carlo simulations run. Remarkably, the proposed randomized greedy procedure achieves an order-of-magnitude speedup over the state-of-the-art algorithm in \cite{chamon2017greedy} while showing only a marginal degradation in the MSE performance. {\color{black}Note that the time of performing eigenvalue decomposition to find the graph shift operator $\U$ in MATLAB was less than 2 seconds on a typical laptop. Figs.~\ref{fig:min} (d) depicts the runtime comparison of the proposed scheme versus the benchmark by accounting for the time of computing the eigenvalue decomposition.}
\begin{figure*}[t]
	\begin{minipage}[t]{0.49\textwidth}
		\centering
		\includegraphics[width=1\linewidth]{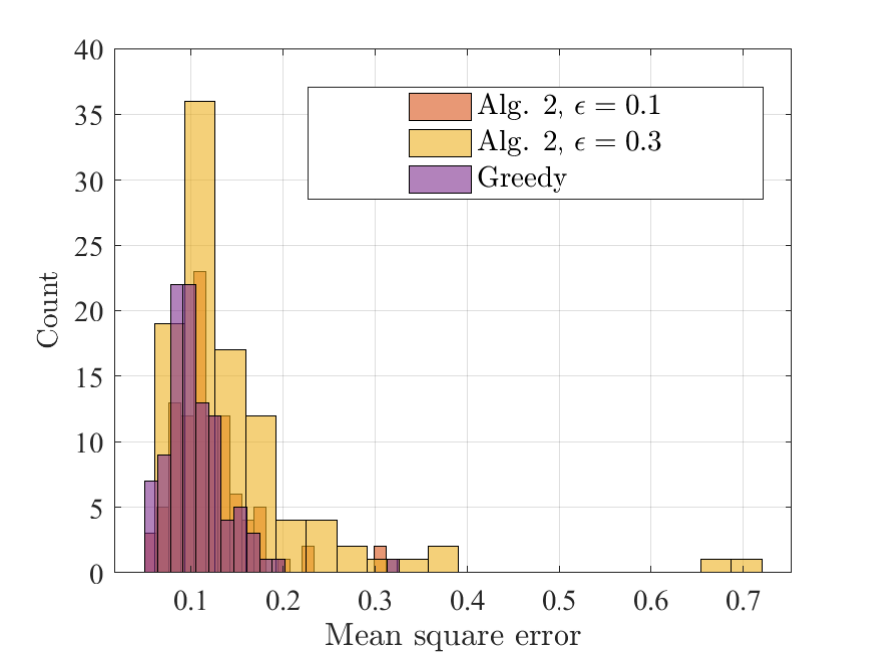}
		\centerline{\footnotesize(a)}\medskip
	\end{minipage}
	\hfill
	\begin{minipage}[t]{0.49\textwidth} 
		\centering
		\includegraphics[width=1\linewidth]{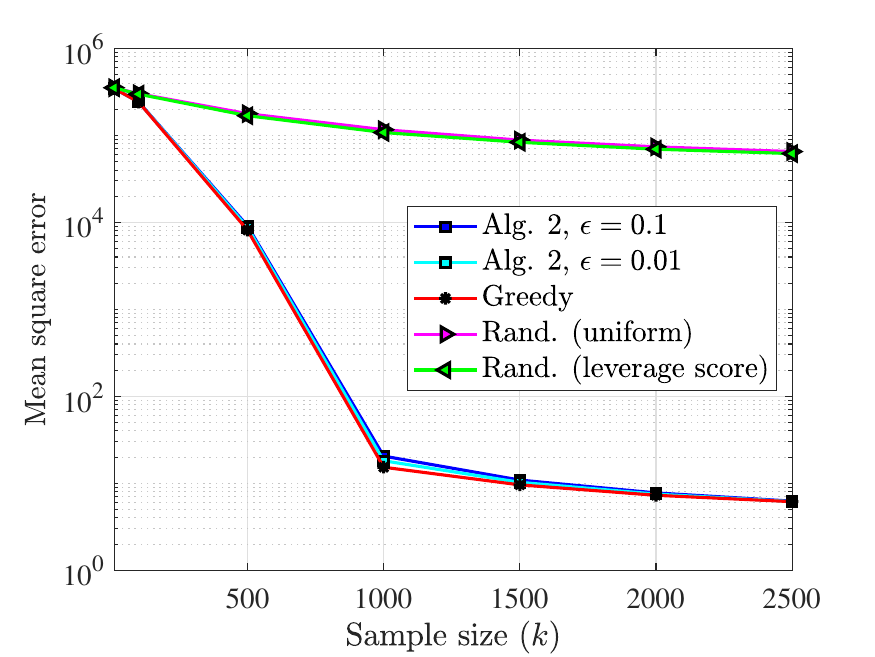}
		\centerline{\footnotesize(b)}\medskip
	\end{minipage}
	\hfill
	\begin{minipage}[t]{0.49\textwidth} 
		\centering
		\includegraphics[width=1\linewidth]{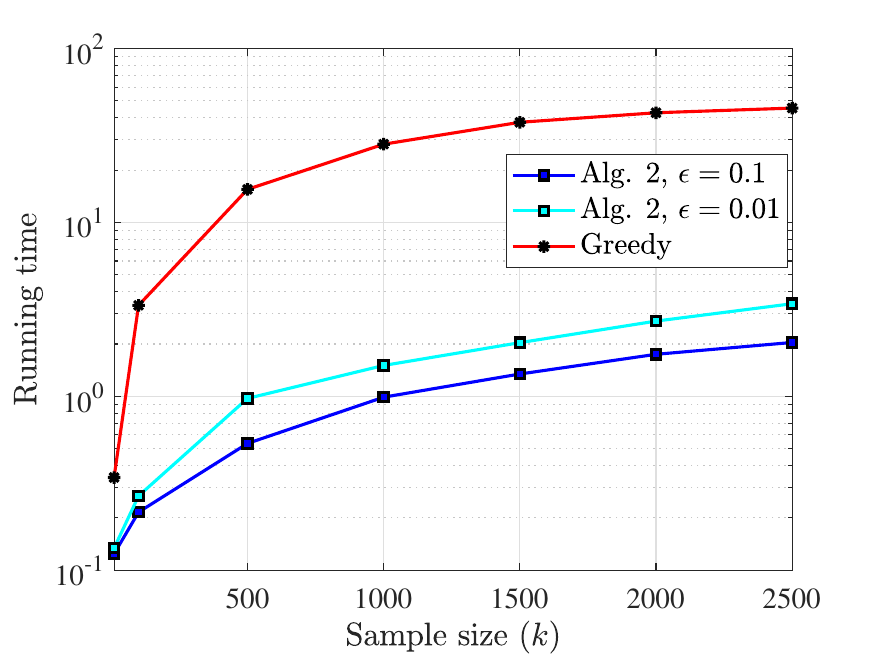}
		\centerline{\footnotesize (c)}
	\end{minipage}
	\begin{minipage}[t]{0.49\textwidth} 
		\centering
		\includegraphics[width=1\linewidth]{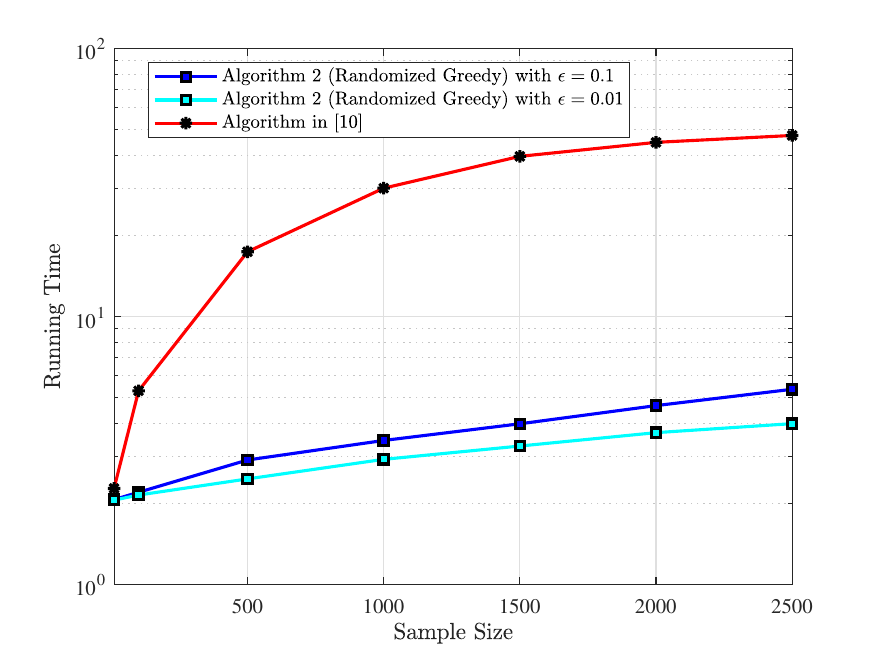}
		\centerline{\footnotesize (d)}\medskip
	\end{minipage}
	\caption{(a) Histogram of MSE values for $100$ realizations and fixed sampling set size in simulated Erd\H{o}s-R\'enyi graphs. (b) MSE comparison of greedy \cite{chamon2017greedy}, Algorithm 2, and random sampling schemes on Minnesota road network. (c) Running time comparison of the greedy scheme \cite{chamon2017greedy} and Algorithm 2 on Minnesota road network, excluding the time of eigenvalue decomposition. (d) Running time comparison of the greedy scheme \cite{chamon2017greedy} and Algorithm 2 on Minnesota road network, including the time of eigenvalue decomposition.}
	\label{fig:min}
\end{figure*}
{\color{black}
\subsection{Synthetic graph: Large-Scale preferential attachment random graph}
Finally, we consider a large-scale preferential attachment random graph \cite{barabasi1999emergence} with $N= 10,000$ nodes to show the superiority of the proposed Algorithm 2 over existing methods. In particular, similar to the previous random graph simulations, we generate random band-limited Gaussian graph signals using the first 500 eigenvectors of the preferential attachment graph adjacency matrix (see Fig. \ref{fig:pa} (c) for the structure of the sparse adjacency matrix).
 The results are illustrated in Fig. \ref{fig:pa} where as we see, Algorithm 2  achieves the same performance as that of the greedy scheme \cite{chamon2017greedy} while incurring orders of magnitude lower running time.

\begin{figure*}[t]
	\begin{minipage}[t]{0.32\textwidth}
		\centering
		\includegraphics[width=1\linewidth]{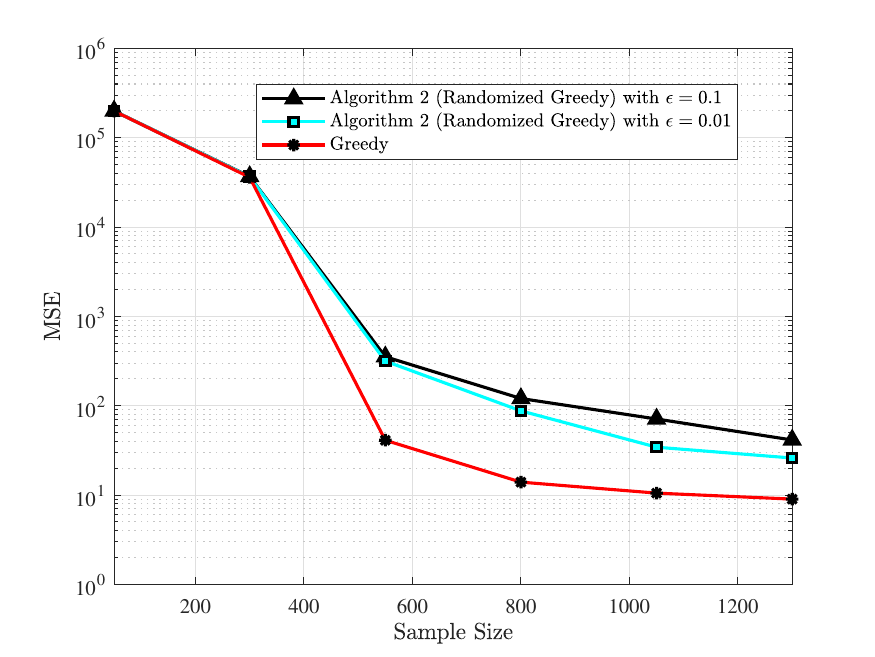}
		\centerline{\footnotesize (a) Reconstruction error}\medskip
	\end{minipage}
	\hfill
	\begin{minipage}[t]{0.32\textwidth} 
		\centering
		\includegraphics[width=1\linewidth]{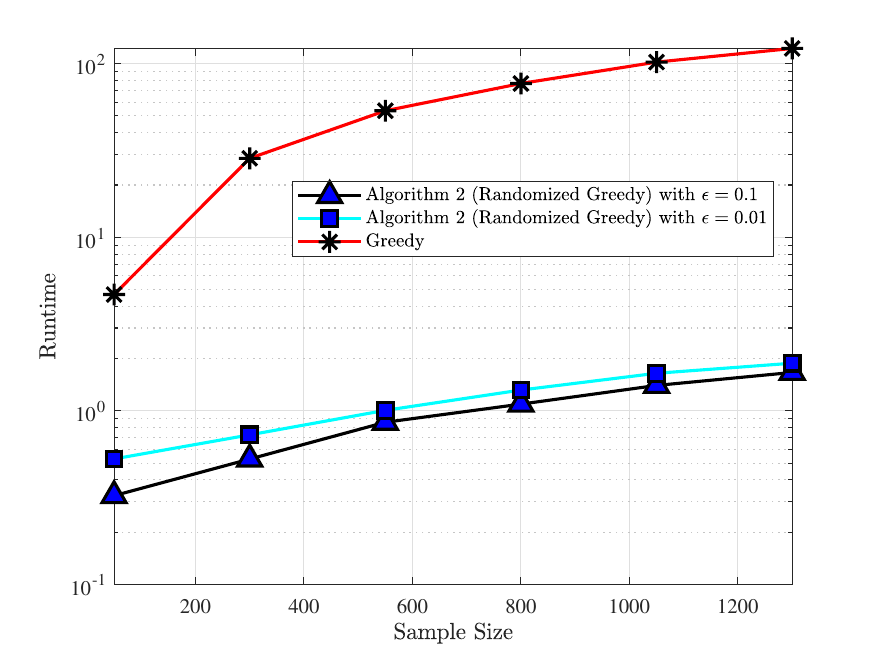}
		\centerline{\footnotesize (b) running time}\medskip
	\end{minipage}
	\hfill
	\begin{minipage}[t]{0.32\textwidth} 
		\centering
		\includegraphics[width=1\linewidth]{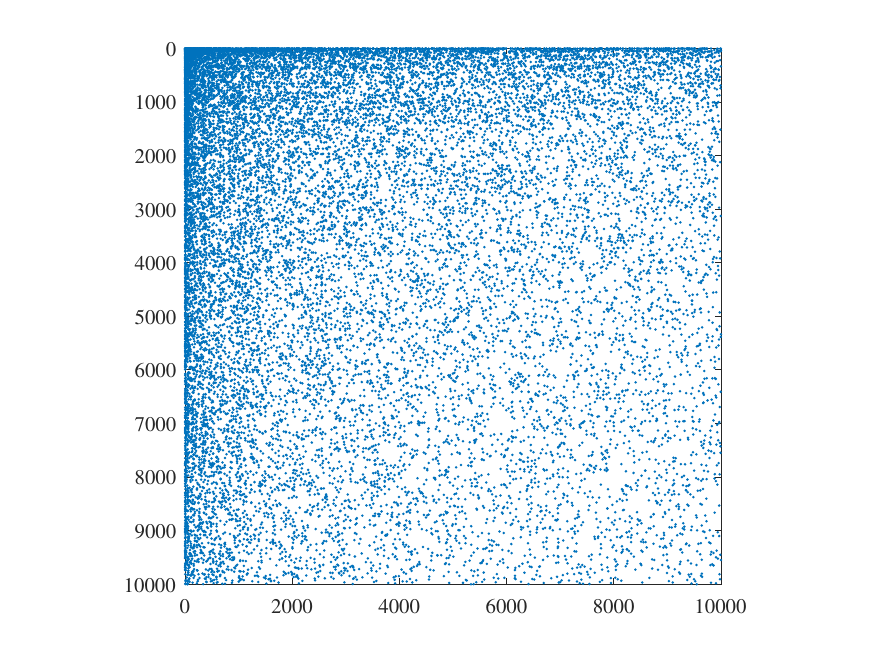}
		\centerline{\footnotesize (c) Adjacency matrix}\medskip
	\end{minipage}
	\caption{Performance comparison of greedy scheme \cite{chamon2017greedy} and Algorithm 2 on a large-scale preferential attachment random graph with $N= 10,000$ nodes.}
	\label{fig:pa}
\end{figure*}
}

\section{Conclusion} \label{sec:concl}
We considered the task of sampling and reconstruction of spectrally sparse graph signals. where the goal 
is to interpolate a (non-stationary) graph signal from a small subset of the nodes with 
the lowest reconstruction error. First, we studied the non-Bayesian scenario and proposed an efficient 
iterative sampling approach that exploits the low-cost selection criterion of the orthogonal matching pursuit 
algorithm to recursively select a subset of nodes of the graph. We then theoretically showed that in the 
noiseless case the original $k$-spectrally sparse signal is perfectly recovered from the set of selected nodes with 
cardinality $k$. In the case of noisy measurements, we established a worst-case performance bound on 
the reconstruction error of the proposed algorithm. In the Bayesian scenario where the graph signal is a 
non-stationary random process, we formulated the sampling task as the problem of maximizing a 
monotone weak submodular function that is directly related to the mean square error (MSE) of the linear 
estimator of the original signal. We proposed a randomized-greedy algorithm to find a sub-optimal subset 
of sampling nodes. By analyzing the performance of the randomized-greedy algorithm, we showed that 
the resulting MSE is a constant factor away from the MSE of the optimal sampling set. Unlike prior work, 
our guarantees do not require stationarity of the graph signal and the study is not restricted to the case 
of additive white noise. Instead, the noise coefficients are assumed to be independent but the power of 
noise varies across individual nodes of the graph.
Extensive simulations on on synthetic and real-world graphs with applications in economics, localization, and clustering showed that the proposed iterative and randomized-greedy selection sampling algorithms outperform the competing alternatives in terms of accuracy~and~runtime.
\section{Acknowledgment} We would like to thank the authors of \cite{marques2016sampling} for providing the data used for the economy network analysis. Work in this paper was supported in part by the NSF award CCF-1750428 and ECCS-1809327.
\begin{appendices}
\section{Proof of Theorem \ref{thm:iss}}\label{pf:iss}
To prove the theorem, it suffices to show that Algorithm \ref{alg:1} selects a subset of rows of $\U$ which are linearly independent. Consider the $i\ts{th}$ iteration where $\u_{s_i}$ is identified and assume that until this iteration $\S$ contains indices of a collection of linearly independent vectors $\{\u_{s_1},\dots,\u_{s_{(i-1)}}\}$. If $|\r_{i-1}^\top \u_{s_i}|\neq 0$, since  $\r_{i-1}$ is orthogonal to the span of $\{\u_{s_1},\dots,\u_{s_{(i-1)}}\}$,  $\u_{s_i}$ is not in the span of these vectors. Hence,  $\{\u_{s_1},\dots,\u_{s_i}\}$ is also a collection of linearly independent vectors and by an inductive argument we conclude that rows of $\U_{\S,r}$ are linearly independent. Now assume  $|\r_{i-1}^\top \u_{s_i}| =  0$ for some $i\leq k$. Since $\U$ does not have all-zero rows, this condition implies $\r_{i-1} = \mathbf{0}$.\footnote{We note that if $|\r_{i-1}^T \u_{s_i}| = 0$ for $i\leq k$, $\r_{i-1} \neq \mathbf{0}$,  and  $\r_{i-1}$ and $\u_{s_i}$ are orthogonal, then  since ${s_i}$ is the optimizer of the selection criterion in step 6, $\r_{i-1}$ is orthogonal to all $\u_j$ with $j\in \mathcal{N}\backslash\{\ell\} \backslash\S$. Now, since by definition $\r_{i-1}$ is orthogonal to the subspace spanned by nodes indexed by $\S$, we conclude that $\r_{i-1} \in \R^k$ is orthogonal to the subspace spanned by all $\u_j$, i.e. $\R^k$. However, this can only hold for $\r_{i-1} = \mathbf{0}$.} Therefore, all the remaining rows of $\U$ which are not selected lie in the span of $\{\u_{s_1},\dots,\u_{s_{(i-1)}}\}$. Since by assumption $i\leq k$, this condition implies that the rank of $\U$ is at most $k-1$ which contradicts the fact that $\V$ is a basis and $\U$ has full column-rank. Therefore, $\r_{i-1} =\mathbf{0}$ holds only for $i>k$ and thus rows of $\U_{\S,r}$ are linearly independent. This completes the proof.  
\section{Proof of Proposition \ref{thm:nois}}\label{pf:thmnois}
According to Theorem \ref{thm:iss}, if $m = k$, $\U_{\S,r} = \C\U$ is invertible. Therefore, since $\Q_\S$ is positive definite and invertible it is easy to see that $\U_{\S,r}^\top\Q_\S^{-1}\U_{\S,r}$ is also invertible. Now consider the case $m\geq k$ where $\U_{\S,r} \in\R^{m\times k} $ is a tall full rank matrix. Let $\Q_\S^{-1} = \mathbf{L} \mathbf{L}^\top$ be the Cholesky decomposition of  $\Q_\S^{-1}$. Since $\Q_\S^{-1}$ is a positive definite matrix, $\mathbf{L}\in\R^{m\times m}$ is full rank and invertible. Therefore, $\mathbf{L}_u = \mathbf{L}^\top\U_{\S,r}$ is  also a full rank matrix. Thus, $\U_{\S,r}^\top\Q_\S^{-1}\U_{\S,r} = \mathbf{L}_u^\top\mathbf{L}_u \in\R^{k \times k}$ is full rank and invertible. 
Hence, for any $m \geq k$ given a $\C$ constructed  by Algorithm \ref{alg:1}, \eqref{eq:normal} simplifies to
\begin{equation}\label{eq:normal1}
\hat{\bar{\x}} = (\U_{\S,r}^\top\Q_\S^{-1}\U_{\S,r})^{-1}\U_{\S,r}^\top \Q_\S^{-1} \tilde{\x}.
\end{equation}
Since  $\x = \U\bar{\x}_\K$, the reconstructed signal $\hat{\x}$ can be obtained according to 
\begin{equation}\label{eq:normal2}
\begin{aligned}
\hat{\x} &= \U(\U_{\S,r}^\top\Q_\S^{-1}\U_{\S,r})^{-1}\U_{\S,r}^\top \Q_\S^{-1} \tilde{\x}\\
& = \U(\U_{\S,r}^\top\Q_\S^{-1}\U_{\S,r})^{-1}\U_{\S,r}^\top \Q_\S^{-1}( \U_{\S,r}\bar{\x}_\K +\n_{\S})\\
& =  \U\bar{\x}_\K+\U(\U_{\S,r}^\top\Q_\S^{-1}\U_{\S,r})^{-1}\U_{\S,r}^\top \Q_\S^{-1}\C\n\\
& = \x +\U(\U_{\S,r}^\top\Q_\S^{-1}\U_{\S,r})^{-1}\U_{\S,r}^\top \Q_\S^{-1}\n_\S.
\end{aligned}
\end{equation}
Therefore,
\begin{equation}\label{eq:normal3}
\begin{aligned}
\|\hat{\x}-\x\|_2 &\leq \|\U(\U_{\S,r}^\top\Q_\S^{-1}\U_{\S,r})^{-1}\U_{\S,r}^\top \Q_\S^{-1}\n_\S\|_2\\
&\stackrel{(a)}{\leq}\|\U(\U_{\S,r}^\top\Q_\S^{-1}\U_{\S,r})^{-1}\U_{\S,r}^\top \Q_\S^{-1}\n\|_2\\
&\stackrel{(b)}{\leq}\sigma_{\max}(\U(\U_{\S,r}^\top\Q_\S^{-1}\U_{\S,r})^{-1}\U_{\S,r}^\top \Q_\S^{-1})\epsilon_\n\\
&\stackrel{(c)}{\leq}\sigma_{\max}(\U)\sigma_{\max}((\U_{\S,r}^\top\Q_\S^{-1}\U_{\S,r})^{-1}\U_{\S,r}^\top \Q_\S^{-1})\epsilon_\n\\
&\stackrel{(d)}{\leq}\sigma_{\max}((\U_{\S,r}^\top\Q_\S^{-1}\U_{\S,r})^{-1}\U_{\S,r}^\top \Q_\S^{-1})\epsilon_\n\\
\end{aligned}
\end{equation}
where $(a)$ and $(b)$ follow by the assumption $\|\n_\S\|_2 \leq \|\n\|_2 \leq \epsilon_\n$, $(c)$ 
stems from submultiplicative property of $\ell_2$-norm, and $(d)$ is by the fact that 
$\sigma_{\max}(\U) = 1$ as it is a submatrix of an orthogonal matrix.
\section{Proof of Proposition \ref{thm:p}}\label{pf:mono}
We first verify that
\[f(\emptyset)=\mathrm{Tr}\left(\P-\bar{\Sb}_{\emptyset}\right)=\mathrm{Tr}\left(\P-\P\right)=0.\]
Next, to show monotonicity, we establish a recursive relation for the marginal gain of selecting a new node on graph. More specifically, for $j\in[n]\backslash \S$ it holds that
\begin{equation}\label{eq:update}
\begin{aligned}
f_j(\S)&= \mathrm{Tr}\left(\P-\bar{\Sb}_{\S\cup \{j\}}\right) -  \mathrm{Tr}\left(\P-\bar{\Sb}_\S\right)\\
&= \mathrm{Tr}\left(\bar{\Sb}_\S\right) -\mathrm{Tr}\left(\bar{\Sb}_{\S\cup \{j\}}\right)\\
&=  \mathrm{Tr}\left(\bar{\Sb}_\S\right)-\mathrm{Tr}\left(\left(\bar{\Sb}_\S^{-1}+\sigma_j^{-2}\u_j\u_j^\top\right)^{-1}\right)\\
&\stackrel{(a)}{=} 
\mathrm{Tr}\left(\frac{\bar{\Sb}_\S\u_j\u_j^\top\bar{\Sb}_\S}{\sigma_j^{2}+\u_j^\top\bar{\Sb}_\S\u_j}\right) \stackrel{(b)}{=} \frac{\u_j^\top\bar{\Sb}_\S^2\u_j}{\sigma_j^{2}+\u_j^\top\bar{\Sb}_\S\u_j}
\end{aligned}
\end{equation}
where $(a)$ easily follows by applying Sherman–Morrison formula \cite{bellman1997introduction} 
on matrix $(\bar{\Sb}_\S^{-1}+\sigma_j^{-2}\u_j\u_j^\top)^{-1}$, and $(b)$ is due to properties of 
the trace of a matrix. Finally, since $\bar{\Sb}_\S$ is the error covariance matrix, it is symmetric 
and positive definite. Hence, $f_j(\S) > 0$, which in turn implies monotonicity.
\section{Proof of Theorem \ref{thm:exp}}\label{pf:exp}
To prove the stated results, we first we state 
Lemma \ref{lem:curv} \cite{ma2} that upper-bounds the difference between the values of the objective 
corresponding to two sets having different cardinalities.

\begin{lemma}\label{lem:curv}
	\cite{ma2} Let $f$ denote a monotone set function with the maximum element-wise curvatures ${\cal C}_{max}$. 
	Let $\S$ and $\T$ be any two sampling sets such that $\S\subset \T \subseteq \mathcal{N}$ with 
	$|\T\backslash \S|=r$. Then, it holds that
	\begin{equation}
	f(\T)-f(\S)\leq  C(r)\sum_{j\in \T\backslash \S}f_j(\S),
	\end{equation}
	where $ C(r)=\frac{1}{r}(1+(r-1){\cal C}_{f})$. Moreover, $C(r)$ is decreasing (increasing) with respect to 
	$\mathcal{R}$ if ${\cal C}_{f}<1$ (${\cal C}_{f}>1$).
\end{lemma}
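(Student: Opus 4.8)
The plan is to establish the inequality by a telescoping decomposition of $f(\T)-f(\S)$ into a sum of marginal gains, to control each gain with the curvature bound, and then to symmetrize over the ordering of $\T\backslash\S$ so as to recover the stated symmetric coefficient $C(r)$. Write $\T\backslash\S=\{j_1,\dots,j_r\}$ and set $\S_0=\S$ and $\S_t=\S\cup\{j_1,\dots,j_t\}$. Telescoping gives the exact identity $f(\T)-f(\S)=\sum_{t=1}^r f_{j_t}(\S_{t-1})$, which so far involves no inequality.

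The next step is to bound each marginal gain. For $t=1$ the term is exactly $f_{j_1}(\S)$. For $t\geq 2$ we have the proper inclusion $\S\subset\S_{t-1}$ with $j_t\notin\S_{t-1}$, so the triple $(\S,\S_{t-1},j_t)$ is admissible in the definition of the maximum element-wise curvature; hence $f_{j_t}(\S_{t-1})\leq{\cal C}_f\,f_{j_t}(\S)$. For this fixed ordering this yields $f(\T)-f(\S)\leq f_{j_1}(\S)+{\cal C}_f\sum_{t=2}^r f_{j_t}(\S)$, in which exactly one element (the first) carries coefficient $1$ while all others carry ${\cal C}_f$.

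The main obstacle is precisely that this intermediate bound is ordering-dependent and asymmetric, whereas the target coefficient $C(r)$ is identical for every element. The resolution exploits the fact that the left-hand side is independent of the ordering, so the bound holds for all $r!$ orderings at once. Averaging over orderings, each $j\in\T\backslash\S$ occupies the first position with probability $1/r$ and a later position with probability $(r-1)/r$, so the averaged coefficient of $f_j(\S)$ becomes $\frac{1}{r}+\frac{r-1}{r}{\cal C}_f=\frac{1+(r-1){\cal C}_f}{r}=C(r)$. Equivalently, and more elementarily, one may sum the $r$ bounds obtained by letting each element take the first slot once: writing $G=\sum_{j\in\T\backslash\S}f_j(\S)$, the summed inequality reads $r\,(f(\T)-f(\S))\leq G+{\cal C}_f(r-1)G$, and dividing through by $r$ delivers $f(\T)-f(\S)\leq C(r)\,G$, as claimed.

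Finally, for the monotonicity claim I would rewrite $C(r)={\cal C}_f+\frac{1-{\cal C}_f}{r}$ and compute the discrete difference $C(r+1)-C(r)=(1-{\cal C}_f)\left(\frac{1}{r+1}-\frac{1}{r}\right)=\frac{{\cal C}_f-1}{r(r+1)}$. This is negative when ${\cal C}_f<1$ and positive when ${\cal C}_f>1$, so $C(r)$ is decreasing in $r$ in the (near-)submodular regime and increasing in $r$ in the weakly submodular regime, matching the parenthetical convention of the statement and completing the argument.
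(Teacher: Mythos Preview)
Your proof is correct. The telescoping identity, the curvature bound $f_{j_t}(\S_{t-1})\le{\cal C}_f\,f_{j_t}(\S)$ for $t\ge 2$, and the symmetrization by summing the $r$ bounds with each element in the leading slot all go through cleanly; the discrete-difference computation for the monotonicity of $C(r)$ is also fine.

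Note, however, that the paper does not supply its own proof of this lemma: it is quoted verbatim from the cited reference and then used as a black box in the proof of Theorem~\ref{thm:exp}. So there is no in-paper argument to compare against. Your write-up is a legitimate self-contained proof of the quoted result, and the averaging/symmetrization device you use is the standard one for turning an ordering-dependent telescoping bound into the symmetric coefficient $C(r)$.
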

To prove the theorem, we first establish a bound on the expected value of the marginal gains 
of adding new nodes to the sampling set. Then, using the results of Lemma \ref{lem:curv}, we 
reduce the proof of approximation factor to that of the classical greedy algorithm introduced in \cite{nemhauser1978analysis}. More specifically, consider the $i\ts{th}$ iteration of Algorithm \ref{alg:greedy} and let $\S$ and $(i+1)_s$ denote the current sampling set and the index of node selected at the $(i+1)\ts{st}$ iteration of Algorithm \ref{alg:greedy}. A necessary condition to 
achieve the optimal MSE is that set $\mathcal{R}$ at each iteration must contain at least one 
node from the optimal sampling set $\O$. Let $\Phi =\mathcal{R} \cap (\O\backslash \S)$.  Since $\mathcal{R}$ is generated via sampling without replacement, it holds that
\begin{equation}
\begin{aligned}
\Pr\{\Phi = \emptyset\}& = \prod_{l = 0}^{s-1} \left(1-\frac{|\O\backslash \S|}{|\mathcal{N}\backslash \S|-l}\right)\\
&\stackrel{(a)}{\leq}\left(1-\frac{|\O\backslash \S|}{s}\sum_{l=0}^{s-1}\frac{1}{N-l}\right)^s\\
&\stackrel{(b)}{\leq} (1-\frac{|\O\backslash \S|}{s}(H_N-H_{N-s}))^s 
\end{aligned}
\end{equation}
where $(a)$ is by the inequality between arithmetic and geometric means and the fact that 
$|\mathcal{N}\backslash \S|\leq N$, and 
\begin{equation}\label{eq:har}
H_p=\sum_{l=1}^p\frac{1}{p}=\log p + \gamma+\zeta_p
\end{equation}
in $(b)$ is the $p\ts{th}$ harmonic number. The object $\gamma$ in \eqref{eq:har} is the Euler–Mascheroni constant, and $\zeta_p	 = \frac{1}{2p} - \mathcal{O}(\frac{1}{p^4})$ is a monotonically decreasing sequence related to Hurwitz zeta function that satisfies $\zeta_p-\zeta_{p-q} = \frac{1}{2p} -\frac{1}{2(p-q)} + \mathcal{O}(\frac{1}{(p-q)^4})$ for all integers $p > q$ \cite{lang2013algebraic}.
Therefore, using the identity \eqref{eq:har} and the fact that $(1+x)^y\leq e^{xy}$ for any real 
number $y>0$, we obtain
\begin{equation}
\begin{aligned}
\Pr\{\Phi = \emptyset\}&\stackrel{}{\leq} \left((1-\frac{s}{N})e^{\frac{s}{2N(N-s)}}\right)^{|\O\backslash \S|}.
\end{aligned}
\end{equation}
Let $\beta_1 = 1 +(\frac{s}{2N}-\frac{1}{2(N-s)})$. Employing the inequality $\log (1-x)\leq -x-\frac{x^2}{2}$ for $0<x<1$ yields $\Pr\{\Phi = \emptyset\} \leq e^{-\frac{\beta_1 s}{N}{|\O\backslash \S|}}$.
Following a similar argument one can obtain $\Pr\{\Phi = \emptyset\} \leq e^{-\frac{ s}{N}{|\O\backslash \S|}}$. 

Let $\beta = \max\{1,\beta_1\}$. Then
\begin{equation}\label{eq:pbound}
\Pr\{\Phi \neq \emptyset\} \geq 1- e^{-\frac{\beta s}{N}|\O\backslash \S|}\geq \frac{1-\epsilon^\beta}{m}(|\O\backslash \S|)
\end{equation}
from the definition of $s= \frac{N}{m}\log (1\slash \epsilon)$ and the fact that $1- e^{-\frac{\beta s}{N}x} $ is a concave function. According to Lemma 2 in \cite{mirzasoleiman2014lazier},
\begin{equation}
\E[f_{(i+1)_s}(\S)|\S]\geq \frac{\Pr\{\Phi \neq \emptyset\}}{|\O\backslash \S|}\sum_{j\in \O\backslash \S}f_o(\S).
\end{equation}
Hence, 
\begin{equation}\label{eq:lazy}
\E\left[f_{(i+1)_s}(\S)|\S\right]\geq \frac{1-\epsilon^{\beta}}{m}\sum_{j\in \O\backslash \S}f_j(\S).
\end{equation}

On the other hand, employing Lemma \ref{lem:curv} with $\T=\O\cup \S$ and invoking 
monotonicity of $f$ yields
\begin{equation}
\begin{aligned}
\frac{f(\O)-f(\S)}{C(r)}&\leq \frac{f(\O\cup \S)-f(\S)}{C(r)}\leq  \sum_{j\in \O\backslash \S}f_j(\S)\\
& \leq \frac{m}{1-\epsilon^{\beta}}\E\left[f_{(i+1)_s}(\S)|\S\right],
\end{aligned}
\end{equation}
where $|\O\backslash \S|=r$. Let $c=\max\{{\cal C}_{f},1\}$. Applying the law of total expectation and the fact that $C(r)\leq c$	yields
\begin{equation}
\E\left[f(\S\cup \{(i+1)_s\})-f(\S)\right]\geq \frac{1-\epsilon^\beta}{mc}\left(f(\O)-\E\left[f(\S)\right]\right).
\end{equation}
With the established result, the proof simplifies to that of the classical greedy algorithm \cite{nemhauser1978analysis}. Therefore, by using a simple inductive argument,
\begin{equation}
\begin{aligned}
\E[f(\S_{rg})]&\geq \left(1-\left(1-\frac{1-\epsilon^\beta}{mc}\right)^m\right)f(\O)\\
&\stackrel{}{\geq} \left(1-e^{-\frac{1}{c}}-\frac{\epsilon^\beta}{c}\right)f(\O) = \alpha f(\O),
\end{aligned}
\end{equation}
where the last inequality is due to the facts that  $(1+x)^y\leq e^{xy}$ for $y>0$ and $e^{ax}\leq 1+axe^a$ for $0<x<1$. Finally, the stated result follows by using the definition of $f(\S)$.
This completes the proof.
\section{Proof of Theorem \ref{thm:pac}}\label{pf:pac}
Consider the $i\ts{th}$ iteration of Algorithm \ref{alg:greedy}. Let $\S$ denote the current 
sampling set and let $(i+1)_{g}$ and $(i+1)_{rg}$ denote indices of the nodes selected at 
the $(i+1)\ts{st}$ iteration of the greedy sampling algorithm 
\cite{chamon2017greedy,shamaiah2010greedy,shamaiah2012greedy} and Algorithm 
\eqref{alg:greedy}, respectively. Similar to the proof of Theorem \eqref{thm:exp}, we start 
by reducing the proof to that of the classical greedy algorithm. To this end, we employ 
Lemma \ref{lem:curv} with $\T=\O\cup \S$ and use monotonicity of $f$ to obtain
\begin{equation}
\begin{aligned}
f(\O)-f(\S)\leq f(\O\cup \S)-f(\S)\leq  c \sum_{j\in \O\backslash \S}f_j(\S).
\end{aligned}
\end{equation}
Note that given the current sampling set $\S$, from the selection criteria of greedy and 
randomized-greedy algorithms for all $j$ it follows that
\begin{equation}\label{eq:rel1}
\begin{aligned}
f(\O)-f(\S)\leq cmf_{(i+1)_{g}}(\S),
\end{aligned}
\end{equation}
where we used the fact that $|\O\backslash \S|\leq m$. On the other hand, 
\begin{equation}\label{eq:rel2}
\begin{aligned}
f(\S\cup\{(i+1)_{rg}\}) - f(\S)&=f_{(i+1)_{rg}}(\S) \\
&=\eta_{i+1} f_{(i+1)_{g}}(\S).
\end{aligned}
\end{equation} 
Combining \eqref{eq:rel1} and \eqref{eq:rel2} yields
\begin{equation}
f(\S\cup\{(i+1)_{rg}\}) - f(\S) \geq \frac{\eta_{i+1}}{mc}\left(f(\O)-f(\S)\right).
\end{equation}
Using a similar inductive argument as we did in the proof of Theorem \ref{thm:exp} and 
due to the fact that $(1+x)^y\leq e^{xy}$ for $y>0$, it follows that
\begin{equation}\label{eq:pacb1}
\begin{aligned}
f(\S_{rg}) &\geq  \left(1-\left(1- \sum_{i=1}^m\frac{\eta_i}{mc}\right)\right)f(\O)\\
&\stackrel{}{\geq} \left(1- e^{-\sum_{i=1}^m\frac{\eta_i}{mc}}\right)f(\O).
\end{aligned}
\end{equation}
Note that if we assume $\{\eta_i\}$ are independent, the term $\sum_{i=1}^m\eta_i$ is a sum of independent bounded random variables. Since $\{\eta_i\}$ are bounded random variables, by Popoviciu's inequality \cite{hogg1995introduction} for all $i\in[m]$ it holds that $\Var[\eta_i] \leq \frac{1}{4}$. Therefore, using Bernstein's inequality\cite{hogg1995introduction} it holds that for all $0<q<1$
\begin{equation}\label{eq:pber}
\Pr\{\sum_{i=1}^m\eta_i< (1-q)m\mu_{\epsilon}\} \leq e^{-\frac{m(1-q)^2\mu_{\epsilon}^2}{\frac{1-q}{3}\mu_{\epsilon}+\frac{1}{4}}} = e^{-C(\epsilon,q)m}.
\end{equation}
Employing this results in \eqref{eq:pacb1} yields 
\begin{equation}
f(\S_{rg}) \geq\left(1- e^{-\frac{(1-q)\mu_{\epsilon}}{c}}\right)f(\O),
\end{equation}
with probability at least $1-e^{C(\epsilon,q)m}$. Recalling the definition of $f(\S)$ leads to the 
stated bound which in turn completes the proof.
\section{Proof of Theorem \ref{thm:curv}}\label{pf:curv}
To prove the stated result, we begin by exploiting the recursive formulation of the marginal gain 
derived in Proposition \ref{thm:p} to establish a sufficient condition for weak submodularity of 
$f(\S)$. More specifically, from the definition of the maximum element-wise curvature and 
\eqref{eq:mg}, for all $(\S,\T,j)\in \mathcal{X}_l$ we have
\begin{equation}
\begin{aligned}
\frac{f_j(\T)}{f_j(\S)}=\frac{(\u_j^\top\bar{\Sb}_\T^{2}\u_j)(\sigma_j^{2}+\u_j^\top\bar{\Sb}_\S\u_j)}{(\u_j^\top\bar{\Sb}_\S^{2}\u_j)(\sigma_j^{2}+\u_j^\top\bar{\Sb}_\T\u_j)}.
\end{aligned}
\end{equation}	
Next, we employ Courant–Fischer min-max theorem \cite{bellman1997introduction} to obtain
\begin{equation} \label{eq:curv40}
\begin{aligned}
\frac{f_j(\T)}{f_j(\S)}&\leq \frac{\lambda_{max}(\bar{\Sb}_\T^{2})(\sigma_j^{2}+\lambda_{max}(\bar{\Sb}_\S)\|\u_j\|_2^2)}{\lambda_{min}(\bar{\Sb}_\S^{2})(\sigma_j^{2}+\lambda_{min}(\bar{\Sb}_\T)\|\u_j\|_2^2)}\\
&\stackrel{(a)}{\leq}\frac{\lambda_{max}(\bar{\Sb}_\T^{2})(\sigma_j^{2}+\lambda_{max}(\bar{\Sb}_\S))}{\lambda_{min}(\bar{\Sb}_\S^{2})(\sigma_j^{2}+\lambda_{min}(\bar{\Sb}_\T))},
\end{aligned}
\end{equation}
where $(a)$ holds since
\begin{equation}
g(x) = \frac{\sigma_j^{2}+\lambda_{max}(\bar{\Sb}_\S)x}{\sigma_j^{2}+\lambda_{min}(\bar{\Sb}_\T)x}
\end{equation} 
is a monotonically increasing function for $x>0$ and $\|\u\|_2^2 \leq 1$. Given the fact that $\lambda_{max}(\bar{\Sb}_\S)=\lambda_{min}(\bar{\Sb}_\S^{-1})^{-1}$, \eqref{eq:curv40} simplifies to 
\begin{equation} \label{eq:curv41}
\begin{aligned}
\frac{f_j(\T)}{f_j(\S)}{\leq}\frac{\lambda_{min}(\bar{\Sb}_\T^{-1})^{-2}(\sigma_j^{2}+\lambda_{min}(\bar{\Sb}_\S^{-1})^{-1})}{\lambda_{max}(\bar{\Sb}_\S^{-1})^{-2}(\sigma_j^{2}+\lambda_{max}(\bar{\Sb}_\T^{-1})^{-1})}.
\end{aligned}
\end{equation}
By Weyl's inequality \cite{bellman1997introduction}, for all $(\S,\T,j)\in \mathcal{X}_l$ it holds that  $\lambda_{min}(\bar{\Sb}_{\mathcal{N}}^{-1})\geq \lambda_{min}(\bar{\Sb}_\T^{-1}) \geq \lambda_{min}(\bar{\Sb}_\S^{-1}) \geq \lambda_{min}(\P^{-1})$ and $\lambda_{max}(\bar{\Sb}_{\mathcal{N}}^{-1})\geq \lambda_{max}(\bar{\Sb}_\T^{-1}) \geq \lambda_{max}(\bar{\Sb}_\S^{-1}) \geq \lambda_{max}(\P^{-1})$. Hence, by definition of maximum element-wise curvature  we have
\begin{equation}
\begin{aligned}
\mathcal{C}_{\max}&\leq\max_{j \in \mathcal{N}} \frac{\lambda_{max}(\P)^{2}(\sigma_j^{2}+\lambda_{max}(\P))}{\lambda_{max}(\bar{\Sb}_{\mathcal{N}}^{-1})^{-2}(\sigma_j^{2}+\lambda_{max}(\bar{\Sb}_{\mathcal{N}}^{-1})^{-1})}\\
&\stackrel{(a)}{\leq}\max_{j \in \mathcal{N}}\frac{(\sigma_j^{2}+\lambda_{max}(\P))(\lambda_{min}(\P)^{-1}+\sigma_j^{-2})^{2}}{\lambda_{max}(\P)^{-2}(\sigma_j^{2}+(\lambda_{min}(\P)^{-1}+\sigma_j^{-2})^{-1})},
\end{aligned}
\end{equation}
where $(a)$ follows since $\lambda_{max}(\bar{\Sb}_{\mathcal{N}}^{-1})\leq \lambda_{max}(\P^{-1}+\sigma_j^{-2}\I_N)$ and because the maximum eigenvalue of a positive definite matrix satisfies the triangle inequality. Note that the denominator of the last inequality is always strictly larger than 
$\sigma_j^2$, and that $\lambda_{max}(\P)\geq \lambda_{min}(\P)$. Following some straight-forward algebra, we obtain
\begin{equation}
\mathcal{C}_{\max} \leq\max_{j \in \mathcal{N}} \frac{\lambda_{\max}^2(\P)}{\lambda_{\min}^2(\P)}\left(1+\frac{\lambda_{\max}(\P)}{\sigma_j^2}\right)^3
\end{equation}
which is the stated result. This completes the proof.
\end{appendices}
\bibliographystyle{ieeetr}
\bibliography{refs}
\end{document}